\theoremstyle{definition}
\newtheorem{definition}{Definition}[section]
\theoremstyle{plain}
\newtheorem{theorem}{Theorem}[section]
\newtheorem{lemma}{Lemma}[section]
\title{ Verifiable Exponential Mechanism for  Median Estimation}
\author{Hyukjun Kwon}
\affiliation{
  \institution{Seoul National University}
  \city{Seoul}
  \country{Republic of Korea}
}
\email{todd4@snu.ac.kr}
\author{Chenglin Fan}
\affiliation{
  \institution{Seoul National University}
  \city{Seoul}
  \country{Republic of Korea}
}
\email{fanchenglin@snu.ac.kr}
\begin{document}

\begin{abstract}
  Differential Privacy (DP) is a rigorous privacy standard widely adopted in data analysis and machine learning. However, its guarantees rely on correctly introducing randomized noise—an assumption that may not hold if the implementation is faulty or manipulated by an untrusted analyst. To address this concern, we propose the first verifiable implementation of the exponential mechanism using zk-SNARKs. As a concrete application, we present the first verifiable differentially private (DP) median estimation scheme, which leverages this construction to ensure both privacy and verifiability. Our method encodes the exponential mechanism and a utility function for the median into an arithmetic circuit, employing a scaled inverse CDF technique for sampling. This design enables cryptographic verification that the reported output adheres to the intended DP mechanism, ensuring both privacy and integrity without revealing sensitive data.
\end{abstract}

% Uncomment the following to link to your code, datasets, an extended version or similar.
% You must keep this block between (not within) the abstract and the main body of the paper.
% \begin{links}
%     \link{Code}{https://aaai.org/example/code}
%     \link{Datasets}{https://aaai.org/example/datasets}
%     \link{Extended version}{https://aaai.org/example/extended-version}
% \end{links}

\maketitle

\section{Introduction}

Modern digital services collect extensive personal data—from GPS and health records to shopping and transit history—to provide tailored conveniences like personalized healthcare and product recommendations. However, such benefits often come at the cost of privacy~\citep{Zang2011Anonymization}. Anonymization alone is insufficient: in the 2006 Netflix Prize, \citep{Narayanan_Netflix} re-identified users from an anonymized dataset using a few known ratings and timestamps.

To address these risks, Differential Privacy (DP)~\citep{DP2006_Dwork, DworkRoth2014} provides formal privacy guarantees by adding carefully calibrated noise, ensuring that the output reveals little about any individual. DP has been widely adopted across domains such as optimization~\citep{gupta2010differentially}, reinforcement learning~\citep{vietri2020private}, and deep learning~\citep{abadi2016deep}. However, because DP introduces randomness, it becomes difficult to verify whether the reported output was computed honestly. A malicious party could distort outputs and attribute deviations to DP noise—creating a gap between privacy and integrity.

Local Differential Privacy (LDP)~\citep{Kasiviswanathan2011WhatCanWeLearnPrivately, Evfimievski2003} avoids a trusted aggregator by having users perturb their data locally. While LDP ensures strong individual privacy, it often suffers from lower utility. Recent work has explored \emph{verifiable} LDP using cryptographic proofs~\citep{Kato0Y21, GarridoSB22, Bontekoe2024}, especially for categorical data via randomized response~\citep{Randomized_Response_Warner1965, Crypto_RR_AmbainisJL03}, but such protocols remain computationally expensive and hard to scale.

In contrast, verifiability in the central DP model has so far focused on numerical queries using mechanisms like Laplace~\citep{Biswas2023, WeiCYLP25}, with no general verifiable protocol for categorical queries requiring the exponential mechanism.

This work closes that gap by introducing the first verifiable exponential mechanism in the central DP setting, enabling efficient, sound verification of privacy-preserving categorical data release.

\textbf{Our Contributions.} We extend verifiable differential privacy in the central model to categorical queries by introducing the first verifiable Exponential Mechanism (EM) for median estimation using zk-SNARKs. Our key contributions are:

\begin{enumerate}
    \item \textbf{Logic and Circuit Design for Verifiable EM:} We design an arithmetic circuit for the exponential mechanism tailored to median estimation, using modular subcircuits that implement a scaled inverse CDF sampling algorithm based on the utility function.
    
    \item \textbf{Public Verifiability:} We integrate zk-SNARKs with the Poseidon hash to enable any Verifier to efficiently verify correct median sampling without trusting the Prover.
    
    \item \textbf{Non-Interactive Protocol:} Our scheme is fully non-intera\-ctive, relying on a public bulletin board and Data Provider-supplied randomness, eliminating the need for online interaction.
    
    \item \textbf{Formal Guarantees:} We prove the scheme satisfies security, differential privacy, and strong utility guarantees.
    
    \item \textbf{Efficient Implementation:} Using Groth16 zk-SNARKs and Poseidon, we demonstrate practicality: for datasets of up to 7,000 entries, proofs are generated in minutes and verified in under a second.
\end{enumerate}

%\textbf{Our Contributions.} We extend verifiable differential privacy in the central setting to categorical queries by introducing the first verifiable Exponential Mechanism (EM) for median estimation using zk-SNARKs. Our contributions are as follows:

%\begin{enumerate}
%    \item \textbf{Logic and Circuit Design for Verifiable Exponential Mechanism:} We construct an arithmetic circuit for the exponential mechanism tailored to median estimation. Our design includes modular subcircuits that collectively implement a scaled inverse CDF sampling algorithm over the distribution defined by the utility function and EM.
    
%    \item \textbf{Public Verifiability:} We integrate zk-SNARKs with the Poseidon hash function to enable any public or private Verifier to efficiently confirm that the reported median was sampled correctly, without malicious tampering.
    
%    \item \textbf{Fully Non-Interactive Protocol:} We ensure non-interactivity by leveraging a public bulletin board and non-interactive randomness supplied by the Verifier, removing the need for any online interaction between Prover and Verifier.
    
 %   \item \textbf{Formal Guarantees:} We provide formal proofs establishing the security, differential privacy, and strong utility guarantees of our proposed scheme.
    
%    \item \textbf{Practical Implementation:} We implement our scheme using Groth16 zk-SNARKs and Poseidon hash function, and demonstrate its practicality by showing that, for datasets of size up to 7,000, proof generation completes within a few minutes and verification takes less than a second.
%\end{enumerate}

\subsection{Related Works}

\textbf{Differential Privacy (DP)} is a rigorous framework for ensuring individual privacy in statistical analysis~\citep{DP2006_Dwork,dwork2006our}. The \emph{Laplace Mechanism}~\citep{DP2006_Dwork} adds calibrated Laplace noise to numeric query results. For categorical queries, the \emph{Exponential Mechanism}~\citep{McSherry2007} samples outputs according to a utility-based distribution, enabling DP in non-numeric domains.

\textbf{Verifiable Differential Privacy (VDP)} enhances DP by allowing third parties to verify compliance. \citep{NarayanFPH15} introduced this concept but lacked support for categorical queries. \citep{Biswas2023} proposed using zk-SNARKs with binomial noise for verifiable counting queries, while \citep{WeiCYLP25} extended this to general numerical queries via the Laplace mechanism. However, support for categorical outputs and the Exponential Mechanism remains an open challenge.

\textbf{Verifiable Local Differential Privacy (VLDP)} ensures each user perturbs their data locally~\citep{Evfimievski2003,Kasiviswanathan2011WhatCanWeLearnPrivately}, often using Randomized Response (RR)~\citep{Randomized_Response_Warner1965,Crypto_RR_AmbainisJL03}. Verifying local perturbations introduces high overhead, as each user must generate a zero-knowledge proof. Recent advances include verifiable binary RR~\citep{Kato0Y21}, polling schemes with zk-SNARKs~\citep{GarridoSB22}, and verifiable \(k\)-ary RR in the shuffle model~\citep{Bontekoe2024}.
Other related research, though not focused on differential privacy, includes Zero-Knowledge Proofs for Large Language Models \citep{SunL024} and Deep Learning~\citep{SunBLZ25}.

\subsection{Technique Overview of Our Approach}

This paper addresses the challenge of verifying that a differentially private  median estimation is both \emph{accurate} and \emph{private}. To achieve this, the authors combine the exponential mechanism (used for DP in categorical queries) with zk-SNARKs, a form of zero-knowledge proof. The probabilistic sampling of the exponential mechanism is encoded into a static arithmetic circuit, which enforces correct execution by design. Precomputed lookup tables approximate exponential values to support sampling within finite fields, while inverse CDF sampling ensures that the output adheres to the desired DP distribution. By generating a succinct proof (via zk-SNARKs) that the circuit was executed faithfully, any third party can verify that the reported median satisfies differential privacy guarantees—without accessing the raw data or compromising privacy.

\section{Preliminaries}

In this section, we introduce the notation and background concepts used throughout this paper. We begin by fixing our basic notation. We then give a concise overview on and background knowledge on security, cryptography and privacy. Finally, we provide a simple description of inverse CDF sampling, which serves as our primary mechanism for generating differentially private outputs.

\subsection{Notations}

Let $\mathcal{X}$ be the set of possible input categories. $\mathcal{DB}\in\mathbb{N}^{|\mathcal{X}|}$ denotes a database of $m$ records. We define "Adjacent Databases" as two databases that differ in exactly one record and denote $\mathcal{DB}\sim \mathcal{DB}'$. $\mathcal{R}$ denotes the range of possible outputs of our mechanism or query. $\mathcal{M}$ denotes a mechanism or query that maps a database to an element of $\mathcal{R}$. We denote the utility function of the exponential mechanism as $u:\mathbb{N}^{|\mathcal{X}|} \times\mathcal{R}\rightarrow\mathbb{R}$. We denote $\mathbb{F}_p$ as a finite field of order prime $p$, where our scheme is defined on. For an integer \(k>0\), \([k]\) denotes the set of integers from \(0\) to \(k-1\). \(||\) denotes concatenation. For a set \(X\), \(\mathcal{U}(X)\) denote the uniform distribution over \(X\). We denote sampling \(x\) from a distribution \(Y\) as \(x\overset{\$}{\leftarrow}Y\).

\subsection{Background on Security and Cryptography}

\subsubsection{Negligiblity and Indistinguishability}

\begin{definition}[Negligibility]
    A function \(\mathsf{negl}:\mathbb{R}\rightarrow\mathbb{R}\) is negligible if for all polynomial \(q(n)\), there exists \(N\in\mathbb{N}\) such that
    \[
        \forall n>N, \mathsf{negl}(n)< \frac{1}{q(n)}
    \]
\end{definition}

\begin{definition}[Computational Indistinguishability]
    Fix a security parameter \(\lambda\in\mathbb{N}\), and probability distributions \(\{X_n\}_{n\in\mathbb{N}},\{Y_n\}_{n\in\mathbb{N}}\) over \(\{0,1\}^{poly(\lambda)}\). We say \(\{X_n\}_{n\in\mathbb{N}}\) and \(\{Y_n\}_{n\in\mathbb{N}}\) are computationally indistinguishable if for all PPT (probabilistic polynomial time) distinguisher \(D\), there exists a negligible function \(\mathsf{negl}(\lambda)\) such that
    \[
        \left|\Pr[D(X_\lambda)=1]-\Pr[D(Y_\lambda)=1]\right| \le \mathsf{negl}(\lambda)
    \]
    We denote the computational indistinguiability as the following notation.
    \[
        \{X_n\}_{n\in\mathbb{N}} \overset{c}{\approx} \{Y_n\}_{n\in\mathbb{N}}
    \]
\end{definition}

\subsubsection{Zero-Knowledge Proofs}

Informally, Zero-Knowledge Proofs (ZKP) \citep{Goldwasser1985} is a technique to prove that a statement is true without leaking any other information about the statement other than the fact that the statement is true. 

We denote the prover as \(\mathcal{P}\) and the verifier as \(\mathcal{V}\). Formally, a zero-knowledge proof system $(\mathcal{P},\mathcal{V})$ must satisfy the following properties.
\begin{definition}[Zero-Knowledge Proofs] \label{def:security definition of ZKP}
Let \(R\subseteq X\times W\) be a relation. We say that a statement \(x\) is true if \((x,w) \in R\) for some witness \(w\in W\). Zero-knowledge proof is a proof system \((\mathcal{P}(x,w), \mathcal{V}(x))\) that satisfies the following properties.
    \begin{itemize}
        \item Completeness: If \((x,w) \in R\), an honest verifier accepts the proof with high probability.
        \item Soundness: If a prover does not know \(w\) such that \((x,w)\in R\), an honest verifier rejects with high probability.
        \item Zero-Knowledge: For every verifier \(\mathcal{V}^*\), there exists a PPT algorithm \(\mathsf{Sim}_{\mathcal{V}^*}\) on input \(x\) such that 
        \[\mathsf{Sim}_{\mathcal{V}^*}(x)\approx \mathsf{View}_{\mathcal{V}^*}[\mathcal{P}(x,w)\leftrightarrow\mathcal{V}^*(x)]\] 
    \end{itemize}
\end{definition}

By ``honest'', we mean that the party faithfully follows the ZKP protocol. \(\mathsf{View}_{\mathcal{V}^*}\) denotes the ``view'' of \(\mathcal{V}^*\), which includes the transcript between \(\mathcal{V}^*\) and \(\mathcal{P}\), secret values of \(\mathcal{V}^*\), and internal randomness of \(\mathcal{V}^*\). If the distribution of the simulated view and the real view is identical, we say it is ``perfect'' zero-knowledge. If the two distributions are computationally indistinguishable, we say it is ``computational'' zero-knowledge.

\subsubsection{zk-SNARKs}

zk-SNARKs (Zero-Knowledge Succinct Non-Interactive Arguments of Knowledge) are a class of non-interactive zero knowledge proofs known for their succinctness and efficiency. The first practical system, Pinocchio~\citep{Pinocchio_ParnoBH13}, was followed by several constructions; we adopt Groth16~\citep{Groth16} for its compact proofs and fast verification.

To generate a zk-SNARK proof, the prover encodes the target statement as an arithmetic circuit or constraint system (e.g., R1CS or QAP) over a finite field \(\mathbb{F}_p\), then compiles it with a ``witness'' into algebraic objects such as polynomials. Verification involves checking that a polynomial identity holds at a randomly chosen point, ensuring correctness without revealing the witness. Formally, we can express a zk-SNARK scheme as follows.

%zk-SNARKs (Zero-Knowledge Succinct Non-Interactive Arguments of Knowledge) form a subclass of zero-knowledge proofs distinguished by their non-interactivity and succinctness. The first practical implementation, Pinocchio\citep{Pinocchio_ParnoBH13}, was presented at IEEE S\&P, and since then a variety of zk-SNARK constructions have been proposed. In our implementation, we adopt Groth16\citep{Groth16} protocol as our zk-SNARK of choice. Groth16 is known for extremely succinct proofs, fast verification and proof generation.

%To prove some statement using a zk-SNARK protocol, prover first translates the statement into a satisfiability problem of arithmetic circuit or constraint system such as R1CS or QAP over a field (usually, a finite field \(\mathbb{F}_p\)). Then using the selected zk-SNARK protocol, the prover compiles circuit or the constraint system and its ``witness'' into algebraic objects such as polynomials. Finally, the verification can be done by checking whether the evaluation of the polynomial on a random point is consistent to the inputs and provided proof. Formally, we can express a zk-SNARK scheme as follows.

\begin{itemize}
    \item \((\mathsf{pp}, \mathsf{pk}, \mathsf{vk}) \leftarrow\mathsf{zkSNARK.Setup}(1^\lambda,\mathcal{C})\)\\ Receiving the security parameter \(\lambda\) and the circuit \(\mathcal{C}\), the setup function generates public parameter \(\mathsf{pp}\), proving key \(\mathsf{pk}\), and verification key \(\mathsf{vk}\). In this step, information about the circuit \(\mathcal{C}\) is put into \(\mathsf{pp}, \mathsf{pk}\) and \(\mathsf{vk}\).
    \item \((Output, \pi)\leftarrow\mathsf{zkSNARK.Prove}(\mathsf{pp},\mathsf{pk}, x, w)\)\\ Receiving public parameter \(\mathsf{pp}\), proving key \(\mathsf{pk}\), public input \(x\), and witness \(w\), the proof generation function generates a proof \(\pi\) that the circuit \(\mathcal{C}\) is satisfiable with the witness \(w\). \(Output\) is the output of the circuit \(\mathcal{C}\) when the assignment to each wire is set according to \(w\).
    \item \(b\leftarrow \mathsf{zkSNARK.Verify}(\mathsf{pp}, \mathsf{vk}, x, y, \pi)\)\\ Receiving public parameter \(\mathsf{pp}\), verification key \(\mathsf{vk}\), public input \(x\), output \(y\), and the proof \(\pi\), the verification function outputs a bit that indicates either \(accept\) or \(reject\).
\end{itemize}
Public parameter \(\mathsf{pp}\) is always required for proof generation and verification. Therefore, \(\mathsf{pp}\) is often omitted for brevity.

Schwartz–Zippel lemma underpins the soundness guarantee by bounding the probability that an incorrect witness satisfies all evaluations in the verification is negligible.

\begin{lemma}[Schwartz–Zippel Lemma] 
Let $P(x_1, x_2, ..., x_n)$ be a non-zero polynomial of degree $d$ over a field \(\mathbb{F}\).

Let \( S \subseteq \mathbb{F} \) be a finite subset. Then:
\[
\Pr_{(r_1, \ldots, r_n) \in S^n} \left[ P(r_1, \ldots, r_n) = 0 \right] \leq \frac{d}{|S|}
\]
where $r_1, ..., r_n$ are chosen independently and uniformly at random from $S$.
\end{lemma}

Since this is an extremely summarized explanation, one is encouraged to refer to additional resources \citep{Pinocchio_ParnoBH13, Groth16, Plonk_GabizonWC19} to understand the underlying mechanism of zk-SNARKs for achieving zero-knowledge and succinct proofs. Also, it is worth noting that all existing zk-SNARKs are ``computational'' zero-knowledge since they rely on cryptographic assumptions such as pairing and Diffie-Hellman assumption to achieve succinct proof and fast verification time.
  
\subsubsection{Cryptographic Hash Function}

\begin{definition}[Cryptographic Hash Function]
    Cryptographic Hash Function is a hash function $H:\mathcal{X}\rightarrow \mathcal{Y}$ that satisfies the following properties.
    \begin{itemize}
        \item Collision Resistance: It is infeasible to find  $x\neq x'\in \mathcal{X}$, such that $H(x)=H(x')$.
        \item Preimage Resistance: Given $y \in \mathcal{Y}$, it is infeasible to find $x\in\mathcal{X}$ such that $H(x)=y$.
        \item Second-Preimage Resistance: Given $x\in \mathcal{X}$, it is infeasible to find $x'\in\mathcal{X}$ such that $H(x)=H(x')$.
    \end{itemize}
\end{definition}

We only use cryptographic hash as a hash function.

\subsubsection{Random Oracle Model}

\begin{definition}[Random Oracle Model]
    For each new input $x\in\mathcal{X}$, hash function $H:\mathcal{X}\rightarrow\mathcal{Y}$ outputs a uniformly random output from $\mathcal{Y}$. For any repeated input $x'\in \mathcal{X}$, the hash outputs the previously sampled value.
\end{definition}
Later we prove the security of our scheme in Random Oracle Model.
The assumption of Cryptographic Hash and the Random Oracle Model is crucial for guaranteeing the soundness and zero-knowledge of our scheme.

\subsubsection{Commitment Scheme}

To guarantee the correctness of the output, we need a mechanism for binding so that no party can change the data for the analysis during the protocol. Also, information about the data must not be leaked in order to achieve zero-knowledge. Such requirements are achieved using commitment schemes. 

\begin{definition} [Commitment Scheme]
    Commitment scheme is a triple of algorithms \((\mathsf{Gen}, \mathsf{Com}, \mathsf{Open})\) over sets \((M,R,C)\).
    \begin{itemize}
        \item \(\mathsf{pp}\leftarrow\mathsf{Gen}(1^\lambda)\) is the setup algorithm that generates public parameters \(\mathsf{pp}\).
        \item \(\mathsf{Com}:M\times R\rightarrow C\) is the commit algorithm that commits to a message in the message space \(M\) using a randomness in the randomness space \(R\) into a commitment in the commitment space \(C\).
        \item \(\mathsf{Open}: C\times R \times M\rightarrow \{accept, reject\}\) is the opening algorithm that outputs \(\mathsf{Open}(c\in C,r\in R, m\in M)=accept\) if \(\mathsf{Com}(m, r)=c\) and outputs \(reject\) otherwise.
    \end{itemize}
\end{definition}

A commitment scheme satisfies the following properties.

\begin{enumerate}
    \item Binding: It is computationally infeasible to find \(c\in C, m_1\neq m_2\in M, \text{ and } r_1,r_2\in R\) such that \[\mathsf{Open}(c, r_1, m_1)=\mathsf{Open}(c,r_2,m_2)=accept\]
    \item Hiding: A commitment \(c=\mathsf{Com}(m,r)\) should not reveal any information about the committed message \(m\).
\end{enumerate}

\subsubsection{Hash-Based Commitment}

One can implement a commitment scheme using cryptographic hash function \(H\) as follows.
\[
    \mathsf{Com}(x,r):=H(x||r)
\]
The binding property holds due to the collision resistance property of cryptographic hash function. The hiding property holds under the Random Oracle Model.

\subsection{Background on Privacy}

\subsubsection{Differential Privacy}
Differential Privacy (in short, DP) \citep{dwork2006our} guarantees that the distribution of the outputs of a mechanism regarding any two adjacent databases cannot differ more than a certain amount. First introduced by Dwork et al., DP has become a standard for data privacy preservation. For standard analysis of DP, parameters $\varepsilon$ and $\delta$ are used. 

\begin{definition}[$(\varepsilon, \delta)$-Differential Privacy]
    A mechanism $\mathcal{M}$ satisfies $(\varepsilon,\delta)$-DP if for all $\mathcal{DB}\sim\mathcal{DB'}$ and all $S\subseteq\mathcal{R}$:
    \begin{equation*}
        \Pr[\mathcal{M}(\mathcal{DB})\in S] \le e^\varepsilon\cdot \Pr[\mathcal{M}(\mathcal{DB}')\in S]+\delta
    \end{equation*}
\end{definition}

\begin{definition}[IND-Computational Differential Privacy]
    \leavevmode
    Fix a security parameter \(\lambda\in\mathbb{N}\) and let \(\mathcal{M}=\{\mathcal{M}_\lambda\}_{\lambda\in\mathbb{N}}\) be a family of randomized algorithms. We say that \(\mathcal{M}\) satisfies computational \((\epsilon,\delta)\)-DP if for every PPT distinguisher \(D\), for all $\mathcal{DB}\sim\mathcal{DB'}$ and for all $S\subseteq\mathcal{R}$:
    \begin{align*}
        \Pr[D(\mathcal{M}_\lambda(\mathcal{DB}&)\in S)=1] \\
        &\le e^\varepsilon\cdot\Pr[D(\mathcal{M}_\lambda(\mathcal{DB}')\in S)=1]+\delta(\lambda)
    \end{align*}
\end{definition}

In this paper, we only use IND-CDP for computational differential privacy.

Smaller $\varepsilon$ means stronger privacy. If $\delta=0$, we call it Pure Differential Privacy. On the other hand, if $\delta>0$, we call it Approximate Differential Privacy. Also, in Computational DP model, we often say for brevity that \(\mathcal{M}\) is \(\varepsilon\)-DP when \(\delta(\lambda)\) is negligible. 

The \(\ell_1\)-sensitivity of the query determines how much noise is required to guarantee $(\varepsilon,\delta)$-DP.

\begin{definition}[$\ell_1$-Sensitivity]
    The $\ell_1$-sensitivity of a query function $f$ is defined by:
    \begin{equation*}
        \Delta f=\max_{\mathcal{DB}\sim\mathcal{DB}'}\|f(\mathcal{DB})-f(\mathcal{DB}')\|_1
    \end{equation*}
\end{definition}

\subsubsection{Exponential Mechanism}

Exponential Mechanism (EM) \cite{McSherry2007} is a DP mechanism that is specialized in categorical queries. EM rates each element $r$ in the range of the query by a utility function $u(\mathcal{D},r)$, and outputs an element $r'\in \mathcal{R}$ with probability proportional to the exponential of the utility score of $r'$. We present formal definition of the sensitivity of Utility Function and resulting EM.

\begin{definition}[Sensitivity of Utility Function]
    The sensitivity of a utility function $u$ is defined by:
    \begin{equation*}
        \Delta u=\max_{r\in \mathcal{R}}\max_{\mathcal{DB}\sim\mathcal{DB}'}\left|u(\mathcal{DB},r)-u(\mathcal{DB}',r)\right|_1
    \end{equation*}
\end{definition}

\begin{definition}[Exponential Mechanism] \label{def:Exponential Mechanism}
    The Exponential Mechanism $\mathcal{M}$ selects and outputs an element $r\in\mathcal{R}$ with probability as follows:
    \begin{equation*}
        \Pr(\mathcal{M}(\mathcal{DB},u,\mathcal{R})=r')=\frac{\exp\left(\frac{u(\mathcal{DB},r')}{2\Delta u}\right)}{\sum_{r \in \mathcal{R}}\exp\left(\frac{u(\mathcal{DB},r)}{2\Delta u}\right)}
    \end{equation*}
\end{definition}

\begin{theorem} \label{thm:Exponential Mechanism Privacy}
    The Exponential Mechanism of Definition~\ref{def:Exponential Mechanism} satisfies $(\varepsilon,0)$-DP.
\end{theorem}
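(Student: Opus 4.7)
The plan is to carry out the classical pointwise ratio argument for the exponential mechanism, then lift it from singletons to arbitrary measurable sets $S\subseteq\mathcal{R}$ by summation. Fix adjacent databases $\mathcal{D}\sim\mathcal{D}'$ and an arbitrary output $r'\in\mathcal{R}$. By Definition~\ref{def:Exponential Mechanism}, the ratio
\[
\frac{\Pr[\mathcal{M}(\mathcal{D},u,\mathcal{R})=r']}{\Pr[\mathcal{M}(\mathcal{D}',u,\mathcal{R})=r']}
\]
factors into two independent pieces: a numerator ratio involving $\exp\!\bigl(\varepsilon u(\mathcal{D},r')/(2\Delta u)\bigr)$ versus $\exp\!\bigl(\varepsilon u(\mathcal{D}',r')/(2\Delta u)\bigr)$, and a normalization ratio involving the partition sums $\sum_r \exp\!\bigl(\varepsilon u(\cdot,r)/(2\Delta u)\bigr)$ over the two databases. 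I would treat each piece separately.

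First I would bound the numerator piece. By the definition of $\Delta u$, we have $|u(\mathcal{D},r')-u(\mathcal{D}',r')|\le\Delta u$, hence this ratio is at most $\exp(\varepsilon/2)$. Next I would bound the normalization piece. Writing the sum ratio termwise and applying the sensitivity bound to each summand, every individual ratio $\exp\!\bigl(\varepsilon u(\mathcal{D}',r)/(2\Delta u)\bigr)/\exp\!\bigl(\varepsilon u(\mathcal{D},r)/(2\Delta u)\bigr)$ lies in $[\exp(-\varepsilon/2),\exp(\varepsilon/2)]$. The key algebraic step is the observation that if $a_r\le c\cdot b_r$ for every $r$, then $\sum_r a_r\le c\sum_r b_r$; applying this with $c=\exp(\varepsilon/2)$ shows the partition ratio is also at most $\exp(\varepsilon/2)$. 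Multiplying the two bounds gives the pointwise inequality $\Pr[\mathcal{M}(\mathcal{D})=r']\le \exp(\varepsilon)\cdot\Pr[\mathcal{M}(\mathcal{D}')=r']$.

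Finally I would extend this to subsets: for any $S\subseteq\mathcal{R}$, summing the pointwise bound over $r'\in S$ yields $\Pr[\mathcal{M}(\mathcal{D})\in S]\le \exp(\varepsilon)\Pr[\mathcal{M}(\mathcal{D}')\in S]$, which is exactly $(\varepsilon,0)$-DP since $\delta=0$. There is no genuine obstacle here beyond bookkeeping; the only subtle point I would flag explicitly is the termwise-to-sum step in the partition bound, because it is easy to misread the normalization ratio as a product of ratios rather than a ratio of sums. I would also note that the factor of $2$ in the denominator of the exponent is exactly what splits the total privacy budget $\varepsilon$ equally between the numerator contribution and the normalization contribution, so any deviation from this scaling would break the argument.
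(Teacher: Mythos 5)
Your proof is correct and is the classical McSherry--Talwar argument that the paper cites but does not reproduce; Theorem~\ref{thm:Exponential Mechanism Privacy} appears in the preliminaries as imported background and carries no in-text proof. The decomposition into a numerator ratio and a normalization ratio, each bounded by $\exp(\varepsilon/2)$ via the sensitivity $\Delta u$, followed by summation over singletons to pass to arbitrary $S\subseteq\mathcal{R}$, is exactly the standard route. One editorial point worth noting: Definition~\ref{def:Exponential Mechanism} as printed omits $\varepsilon$ from the exponent, reading $\exp\!\left(\frac{u(\mathcal{D},r')}{2\Delta u}\right)$ rather than the intended $\exp\!\left(\frac{\varepsilon\, u(\mathcal{D},r')}{2\Delta u}\right)$; you have implicitly and correctly restored it (the later definition of $\mathcal{M}_{\mathrm{median}}$, which is ``proportional to $\exp(\varepsilon\, u(\mathcal{D},r')/2)$,'' confirms this was the authors' intent), and without that restoration the stated $(\varepsilon,0)$-DP claim would not hold for tunable $\varepsilon$.
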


\subsubsection{Median Query and DP Median Estimation}

\begin{definition}[Median Query]
    \begin{equation*}
        \text{Median}(\mathcal{DB}) = 
        \begin{cases}
            x_{\frac{m+1}{2}} & \text{if } m \text{ is odd} \\
            \frac{1}{2}(x_{\frac{m}{2}}+x_{\frac{m}{2}+1})  & \text{if } m \text{ is even}
        \end{cases}
    \end{equation*}
\end{definition}
In this paper, we use the exponential mechanism for DP median estimation. First, we define the rank function $\text{rank}_\mathcal{DB}(r)$ and the utility function $u(\mathcal{DB},r)$ as follows.
\begin{definition}[Rank and Utility Function]\label{def:util and rank}
    \begin{align*}
        \text{rank}_\mathcal{DB}(r) &= \left| \{x\in \mathcal{DB} \mid x < r\} \right| \\
        u(\mathcal{DB},r) &= -\left|\text{rank}_\mathcal{DB}(r)-\frac{n-1}{2}\right|
    \end{align*}
\end{definition}

\begin{lemma}[Sensitivity of $u(\mathcal{DB},r)$] \label{lem:sensitivity of utility func.}
    The sensitivity of $u(\mathcal{DB},r)$ is 1. In other words, $\Delta u=1$.
\end{lemma}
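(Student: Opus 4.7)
The plan is to combine two elementary facts: that the rank function has sensitivity $1$ under record replacement, and that taking absolute value is $1$-Lipschitz. Fix any $r \in \mathcal{R}$ and any adjacent databases $\mathcal{D} \sim \mathcal{D}'$. First I would argue that
\begin{equation*}
\bigl|\text{rank}_\mathcal{D}(r) - \text{rank}_{\mathcal{D}'}(r)\bigr| \le 1,
\end{equation*}
which follows directly from the definition of adjacency: since $\mathcal{D}$ and $\mathcal{D}'$ differ in exactly one record, the count of entries strictly less than $r$ can change by at most one (the replaced element may or may not lie below $r$, but all other contributions to the count are identical).

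Next I would apply the reverse triangle inequality $\bigl||a| - |b|\bigr| \le |a - b|$ with $a = \text{rank}_\mathcal{D}(r) - \tfrac{n-1}{2}$ and $b = \text{rank}_{\mathcal{D}'}(r) - \tfrac{n-1}{2}$. Since the constant $\tfrac{n-1}{2}$ cancels in the difference, this gives
\begin{equation*}
\bigl|u(\mathcal{D}, r) - u(\mathcal{D}', r)\bigr| = \bigl||a| - |b|\bigr| \le |a - b| = \bigl|\text{rank}_\mathcal{D}(r) - \text{rank}_{\mathcal{D}'}(r)\bigr| \le 1.
\end{equation*}
Taking the maximum over $r$ and over pairs $\mathcal{D} \sim \mathcal{D}'$ yields $\Delta u \le 1$.

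Finally, to confirm that the bound is tight (and so $\Delta u = 1$ rather than merely $\Delta u \le 1$), I would exhibit a minimal example: pick any $r$ and construct $\mathcal{D}, \mathcal{D}'$ by swapping a single record from a value below $r$ to a value at or above $r$, chosen so that the rank lands on opposite sides of $\tfrac{n-1}{2}$ (or so that exactly one of $|a|, |b|$ jumps by $1$). This witnesses sensitivity exactly $1$.

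The argument is essentially a one-line application of the reverse triangle inequality, so there is no real obstacle; the only point requiring mild care is that the shift by $\tfrac{n-1}{2}$ is a fixed constant under the replacement interpretation of adjacency (so it cancels in $a - b$), and not a quantity that itself changes when moving between $\mathcal{D}$ and $\mathcal{D}'$. Under this reading of adjacency, the proof is immediate.
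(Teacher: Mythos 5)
Your argument is correct in substance and follows the natural route: the rank function changes by at most $1$ when a single record is replaced, and the absolute-value map is $1$-Lipschitz, so the reverse triangle inequality gives $\Delta u \le 1$ directly. You also correctly flag the one point that requires care — that the shift $\tfrac{n-1}{2}$ must be a fixed constant between $\mathcal{D}$ and $\mathcal{D}'$, which holds in the record-replacement model of adjacency (the model implicitly used in the paper, since the circuit and hence $m$ is of fixed size).

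One small slip in the tightness part: the condition ``the rank lands on opposite sides of $\tfrac{n-1}{2}$'' does \emph{not} in general witness sensitivity $1$. For instance $a = -\tfrac12$, $b = \tfrac12$ gives $|a-b| = 1$ but $\bigl||a|-|b|\bigr| = 0$. To achieve equality you need $a$ and $b$ on the \emph{same} side of zero (both nonnegative or both nonpositive), e.g.\ take $\mathcal{D}$ with all records $\ge r$ (rank $0$) and $\mathcal{D}'$ obtained by replacing one record with a value $< r$ (rank $1$); with $\tfrac{n-1}{2} \ge 1$ this gives $\bigl||a|-|b|\bigr| = 1$. Your parenthetical alternative (``exactly one of $|a|, |b|$ jumps by $1$'') is the right condition, but the ``opposite sides'' phrasing should be dropped or corrected. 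With that fix the proof is complete.
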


 Now we describe a DP median estimation mechanism $\mathcal{M}_{median}$ based on exponential mechanism.

\begin{definition}[DP Median Estimation]
    The DP median estimation $\mathcal{M}_{median}(\mathcal{DB},u,\mathcal{R})$ outputs an element $r'\in\mathcal{R}$ with probability proportional to $\exp\left(\frac{\varepsilon\cdot u(\mathcal{DB},r')}{2}\right)$ where the utility function $u$ is defined as Definition~\ref{def:util and rank}.
\end{definition}

\subsection{Inverse CDF Sampling}
Our scheme performs Inverse CDF Sampling from the distribution derived from the exponential mechanism. The definition of Inverse CDF Sampling is as follows. 

\begin{definition}[Inverse CDF Sampling]
    Assume that the CDF of a distribution $D$ is $F_D$. To sample from the distribution $D$, draw $U\sim \text{Uniform}(0,1)$ and output $Y=F_D^{-1}(U)$.
\end{definition}

\section{Security Model}
There are four roles—\textbf{Data Providers} \(\{\mathcal{D}_i\}_{i}\), a \textbf{Data Analyst (Prover)} \(\mathcal{P}\), a \textbf{Verifier} \(\mathcal{V}\), and a \textbf{Public Bulletin Board} \(\mathcal{B}\). All parties other than \(\mathcal{B}\) have private source of randomness. \(\mathcal{B}\) does not need any. 
We assume that all communication between the parties is secure against eavesdropping and forgery.

\textbf{Data Providers}  \(\{\mathcal{D}_i\}_{i}\) are the statistical population. Before protocol execution, each Data Provider \(\mathcal{D}_i\) with private input \(x_i\in\mathcal{R}\) independently samples private randomness $r_i\in\mathbb{F}_p$ and posts a commitment $\mathsf{Com}(x_i, r_i)$ on the public bulletin board. When the protocol begins, each of them sends $(x_i, r_i)$ to the Data Analyst. We assume every Data Provider follows the protocol exactly as specified and does not collaborate with any other party. Modeling collusion, strategic deviation, or dishonest behavior among the Data Providers is outside the scope of this work.

\textbf{Data Analyst} \(\mathcal{P}\) receives the data from the Data Providers and publish the verifiable and differentially private analysis to the dataset. It is the main adversary in our verifiable differential privacy framework. The Analyst may maliciously publish incorrect statistical outputs with counterfeit proofs. Therefore, the central goal of our protocol is to ensure that no Data Analyst can produce a false output or proof that deceives the Verifier into accepting an invalid result.

\textbf{Verifier} \(\mathcal{V}\) checks the proof’s validity and is curious about the underlying private data. To prevent any leakage of information to \(\mathcal{V}\), our protocol satisfies the zero-knowledge property, ensuring the Verifier learns nothing beyond the correctness of the result.

\textbf{The Bulletin Board} \(\mathcal{B}\) is a broadcast channel with an append-only log. Anyone can post a message on the board and the message is then identified with a unique index \(i\). Anyone can read the message from the board by sending the corresponding index \(i\) to \(\mathcal{B}\). We assume the ownership of all messages on the board is publicly visible. The board provides the following guarantees:
\begin{itemize}
    \item Immutability: Once a message is posted on the board, no one can move, alter or erase it.
    \item Global Consistency: Once a message is posted on the board at index \(i\), any reader obtains identical retrieval.
    \item Availability: A single read operation takes constant time.
\end{itemize}
An ideal public bulletin board can be implemented with a permissionless blockchain, which enforces immutability and global consistency via consensus mechanism and cryptographic hashing, while still allowing O(1) access to any indexed entry.

\section{Verifiable Exponential Mechanism}

We present our main technical contribution: the design and implementation of \(\mathsf{VerExp}\), a verifiable exponential mechanism for median estimation. This section includes a high-level overview, and implementation; architecture of the circuit, and core design choices. Detailed description of submodules of the main circuit is provided in the Appendix. 

\subsection{Pipeline}

We assume $m$ and $n$ are positive integers. The overall pipeline is illustrated in Table~\ref{table:pipeline}.

The protocol involves \(m\) Data Providers and a single Data Analyst. The number of Verifiers remains unspecified due to public verifiability. For each \(i\in[m]\), the \(i\)-th Data Provider holds a private integer input \(x_i\in\mathcal{R}\) and independent randomness value \(r_i\in\mathbb{F}_p\). The goal is to generate a differentially private median estimation of \(\{x_i\}_{i\in[m]}\). Note that the range \(\mathcal{R}\) is displayed as \(\{\mathsf{range}_i\}_{i\in[n]}\) for better understanding.

Verifiable exponential mechanism for median estimation \((\mathsf{VerExp})\) consists of the following algorithms:

\begin{itemize}
    \item \(\mathsf{VerExp.Setup}\): Receive the security parameter \(\lambda\) with circuit \(\mathcal{C}\) and generate public parameter \(\mathsf{pp}\), proving key \(\mathsf{pk}\), and verification key \(\mathsf{vk}\).
    \item \(\mathsf{VerExp.Prove}\): Receive \(\mathsf{pk}\) with the range of the median query \(\mathcal{R}=\{\mathsf{range}_i\}_{i\in[n]}\) and return a differentially private median estimation \(\mathsf{med}\), commitments \(\{\mathsf{com}_i\}_{i\in[m]}\), and proof \(\pi\).
    \item \(\mathsf{VerExp.Verify}\): Receive the following tuple,
    \[
        (\mathsf{vk}, \mathsf{med}, \{\mathsf{range}_i\}_{i\in[n]}, \{\mathsf{c}_i\}_{i\in[m]}, \{\mathsf{com}_i\}_{i\in[m]}, \pi)
    \]
    and return a bit representing \(accept\) or \(reject\). \(\{\mathsf{c}_i\}_{i\in[m]}\) is the commitment published on the public bulletin board.
\end{itemize}

The pipeline of whole scheme consists of four phases.

\begin{table}[h]
  \centering
  \begin{tabular}{p{0.95\linewidth}}
    \toprule
        \(\mathsf{VerExpPipeline}(\{\mathcal{D}_i\}_{i\in[m]}, \mathcal{P}, \mathcal{V}, \lambda)\) \\ 
    \midrule
    \begin{minipage}{\linewidth}
      \begin{algorithmic}[1]
        \STATE \((\mathsf{pp}, \mathsf{pk}, \mathsf{vk})\leftarrow\mathsf{VerExp.Setup}(1^\lambda, \mathcal{C})\)
        \STATE \(\mathsf{pp}\) remains public. \(\mathsf{pk}\) is sent to \(\mathcal{P}\) and \(\mathsf{vk}\) is sent to \(\mathcal{V}\)
        \FOR{\(i\in [m] \) }
            \STATE \(\mathcal{D}_i\) obtains \(r_i\) from its private randomness
            \STATE \(\mathcal{D}_i\) obtains \(\mathsf{c}_i\leftarrow\mathsf{Com}(x_i,r_i)\).
            \STATE \(\mathcal{D}_i\) publishes \(\mathsf{c}_i\) into the public bulletin board
            \STATE \(\mathcal{D}_i\) sends \((x_i, r_i)\) to \(\mathcal{P}\)
        \ENDFOR
        \STATE \(\mathcal{P}\) generates witness \(w \leftarrow \mathsf{WGen}(\mathcal{C}, \{(x_i, r_i)\}_{i\in[m]})\)
        \STATE \(\mathcal{P}\) runs \[\mathsf{VerExp.Prove}(\mathsf{pk}, \{\mathsf{range}_i\}_{i\in[m]}, w)\] and to obtain and publish \((\mathsf{med},\{\mathsf{com}_i\}_{i\in[m]}, \pi)\)
        \STATE \(\mathcal{V}\) reads the public bulletin board and runs 
        \begin{align*}
            \mathsf{VerExp.Verify}(\mathsf{vk}, \mathsf{med}, \{\mathsf{range}_i\}_{i\in[n]}, \\\{\mathsf{c}_i\}_{i\in[m]}, \{\mathsf{com}_i\}_{i\in[m]}, \pi)
        \end{align*}
        to obtain the output \(b\)
        \IF{\(b=accept\)}
            \STATE \(\mathcal{V}\) accepts
        \ELSE
            \STATE\(\mathcal{V}\) rejects
        \ENDIF
      \end{algorithmic}
    \end{minipage}
    \\ \bottomrule 
  \end{tabular}
  \caption{Pipeline of \(\mathsf{VerExp}\)}
  \label{table:pipeline} 
\end{table}

\textbf{Setup Phase}. (1st-2nd row of Table~\ref{table:pipeline}) Setup participants invokes the algorithm \(\mathsf{VerExp.Setup}\) to generate the public parameter, the proving key, and the verification key. The necessity of a trusted setup party depends on the zk-SNARK scheme being used. Then the key is distributed to \(\mathcal{P}\) and \(\mathcal{V}\).

\textbf{Commitment Phase}. (3rd-6th row of Table~\ref{table:pipeline}) Each Data Provider commits to its private input \(x_i\) by publishing \(\mathsf{c}_i=\mathsf{Com}(x_i,r_i)\) on a public bulletin board. We use a hash-based commitment.

\textbf{Proof Generation Phase}. (7th-10th row of Table~\ref{table:pipeline}) Each Data Provider transmits \((x_i,r_i)\) to the Data Analyst over a secure channel. The Data Analyst aggregates these inputs to generate a witness \(w\) with witness generator function \(\mathsf{WGen}\). Then after generating the proof by running the algorithm \(\mathsf{zkSNARK.Prove}\), the Data Analyst publishes the perturbed statistics together with the corresponding zero-knowledge proof produced by the zk-SNARK scheme.

\textbf{Verification Phase}. (11th-16th row of Table~\ref{table:pipeline}) The procedure is as follows. (i) Confirm that the circuit's commitments output (\(\{\mathsf{com}_i\}_{i\in[m]}\)) match those posted on the public bulletin board (\(\{\mathsf{c}_i\}_{i\in[m]}\)). (ii) Check \(\{\mathsf{range}\}_{i\in[n]}\) matches the public inputs of the circuit. (iii) Check the proof's validity. Since all materials for verification are publicly accessible, the scheme publicly verifiable.

\begin{algorithm}[h!]
    \caption{\(\mathsf{VerExp.Setup}\)}
    \label{alg:algorithm1}
    \textbf{Input}: \(1^\lambda, \mathcal{C}\)\\
    \textbf{Output}: \(\mathsf{pp}, \mathsf{pk}, \mathsf{vk}\)
        \begin{algorithmic}[1] %[1] enables line numbers
            \STATE \((\mathsf{pp}_{\mathsf{zkSNARK}}, \mathsf{pk}, \mathsf{vk}) \leftarrow \mathsf{zkSNARK.Setup}(1^\lambda, \mathcal{C})\)
            \STATE \(\mathsf{pp}_\mathsf{Comitment} \leftarrow \mathsf{Gen}(1^\lambda)\)
            \STATE \(\mathsf{pp}=(\mathsf{pp}_{\mathsf{zkSNARK}},\mathsf{pp}_\mathsf{Comitment})\)
            \STATE \textbf{return} \(\mathsf{pp}, \mathsf{pk}, \mathsf{vk}\)
        \end{algorithmic}
\end{algorithm}

\begin{algorithm}[h!]
    \caption{\(\mathsf{VerExp.Prove}\)}
    \label{alg:algorithm2}
    \textbf{Input}: \(\mathsf{pk}, \{\mathsf{range}_i\}_{i\in[n]}, w\)\\
    \textbf{Output}: \(\mathsf{med}, \{\mathsf{com}_i\}_{i\in[m]}, \pi\)
        \begin{algorithmic}[1] %[1] enables line numbers
            \STATE \((Output, \pi) \leftarrow \mathsf{zkSNARK.Prove}( \mathsf{pk}, \{\mathsf{range}_i\}_{i\in[n]}, w)\)
            \STATE Parse \(Output=\mathsf{med} \vert\vert (\mathsf{com}_1,...,\mathsf{com}_m)\)
            \STATE \textbf{return} \((\mathsf{med}, \{\mathsf{com}_i\}_{i\in[m]}, \pi)\)
        \end{algorithmic}
\end{algorithm}

\begin{algorithm}[h!]
    \caption{\(\mathsf{VerExp.Verify}\)}
    \label{alg:algorithm3}
    \textbf{Input}: \(\mathsf{vk}, \mathsf{med}, \{\mathsf{range}_i\}_{i\in[n]}, \{\mathsf{c}_i\}_{i\in[m]}, \{\mathsf{com}_i\}_{i\in[m]}, \pi\)\\
    \textbf{Output}: \(b'\in\{accept,reject\}\)
        \begin{algorithmic}[1] %[1] enables line numbers
            \STATE  \(b'\leftarrow \mathsf{zkSNARK.Verify}(\mathsf{vk}, \{\mathsf{range}_i\}_{i\in[n]}, \mathsf{med}, \pi)\)
            \IF{\(b'=accept\) \AND \(\forall i\in[m], \mathsf{c}_i=\mathsf{com}_i\)}
                \STATE \textbf{return} \(accept\)
            \ELSE
                \STATE \textbf{return} \(reject\)
            \ENDIF
        \end{algorithmic}
\end{algorithm}

\subsection{Main Circuit $\mathcal{C}$}

Main circuit \(\mathcal{C}\) is composed of six modules: \(\mathsf{Util}\), \(\mathsf{SubMin}\), \(\mathsf{ExpLookup}\), \(\mathsf{InverseCDF}\), \(\mathsf{Bind}\), and \(\mathsf{Mod}\). 

\textbf{Inputs}: 
\begin{itemize}
    \item \(\{\mathsf{range}_i\}_{i\in[n]}\) : Range of median query
    \item \(\{\mathsf{input}_i\}_{i\in[m]}\) : Inputs from Data Providers
    \item \(\{\mathsf{rand}_i\}_{i\in[m]}\) : Randomness from Data Providers
\end{itemize}

\textbf{Outputs}:
\begin{itemize}
    \item $\{\mathsf{com}_i\}_{i\in[m]}$ : Commitments to Data Providers' inputs
    \item $\mathsf{med}$ : DP median estimation 
\end{itemize}
We assume \(\{\mathsf{range}_i\}_{i\in[n]}\) and all output are public. \(\{\mathsf{input}_i\}_{i\in[m]}\) and \(\{\mathsf{rand}_i\}_{i\in[m]}\) are private.

\begin{table}[h]
  \centering
  \begin{tabular}{p{0.95\linewidth}}
    \toprule
        \(\mathcal{C}(\{\mathsf{range}_i\}_{i\in[n]}, \{\mathsf{input}_i\}_{i\in[m]}, \{\mathsf{rand}_i\}_{i\in[m]})\) \\ 
    \midrule
    \begin{minipage}{\linewidth}
      \begin{algorithmic}[1]
        \STATE \(\{\mathsf{com}_i\}_{i\in[m]}\leftarrow\mathsf{Bind}(\{\mathsf{input}_i\}_{i\in[m]}, \{\mathsf{rand}_i\}_{i\in[m]})\)
        \STATE \(\{\mathsf{util}'_i\}_{i\in[n]}\leftarrow\mathsf{Util}(\{\mathsf{range}_i\}_{i\in[n]},\{\mathsf{input}_i\}_{i\in[m]})\)
        \STATE \(\mathsf{expval}_0\leftarrow\mathsf{ExpLookup}(\mathsf{util}'_0)\)
        \STATE \(s_0\leftarrow\mathsf{expval}_0\)
        \FOR{\(1\le i \le n-1\)}
            \STATE \(\mathsf{expval}_i\leftarrow\mathsf{ExpLookup}(\mathsf{util}'_i)\)
            \STATE \(s_i\leftarrow s_{i-1}+\mathsf{expval}_i\)
        \ENDFOR
        \STATE \(\rho\leftarrow \mathsf{Mod}(\sum_{i=0}^{m-1}\mathsf{rand}_i)\)
        \STATE \(\mathsf{med} \leftarrow \mathsf{InverseCDF}(\{s_i\}_{i\in[n]}, \{\mathsf{range}_i\}_{i=0}^{n-1}, \rho)\)
        \STATE \textbf{output} \((\mathsf{med}, \{\mathsf{com}_i\}_{i\in[m]})\)
      \end{algorithmic}
    \end{minipage}
    \\ \bottomrule
  \end{tabular}
  \caption{Main Circuit \(\mathcal{C}\) for Verifiable Exponential Mechanism}
  \label{table:Main Circit} 
\end{table}

\subsubsection*{Implementation Challenges and Design Decisions} Implementing the exponential mechanism inside a static arithmetic circuit presents two main obstacles. First, the circuit must perform verifiable sampling according to a distribution derived from the utility function, which is nontrivial. Second, standard stochastic methods like rejection sampling rely on data‐dependent control flow or dynamic branching, which is incompatible with the fixed‐depth nature of static circuits.

To overcome these issues, we employ a scaled version of inverse cumulative distribution function (CDF) sampling. By preparing a lookup table of scaled exponential values in advance and encoding the sampling process as a small, fixed‐depth sequence of comparisons and multiplications, we avoid any dynamic branching. This design ensures (i) the sampling faithfully follows the distribution, and (ii) all operations remain within the static circuit model.

Since arithmetic circuits for zk-SNARKs are defined over a finite field, infinite accuracy on real-valued exponential values is impossible. Therefore, we propose a conceptual method to divide the lookup table into two parts. The first part is a physical lookup table $t$. This is the area that works like a normal lookup table that approximates relatively ``large'' exponentials that can be approximated into an integer without disrupting the ratio between its adjacent entries too much. The other part is filled with all 0 (method \textbf{\textit{set0})} or all $k=\left\lceil\frac{1}{\exp(\varepsilon/2)-1}\right\rceil$ (method \textbf{\textit{setk}}). We call the whole infinite-sized lookup table (\(t\) and the rest) as the conceptual lookup table $T$.

\subsection{Implementation Details}

In this section, we provide the details of the modules that constitute the main circuit. Pseudocodes and figures intended to help the reader's understanding are provided in Appendix~\ref{section:pseudocodes} and Appendix~\ref{section:figures}.

\textbf{\(\mathsf{Bind}\) Module} receives \(\{(x_i,r_i)\}_{i\in[m]}\) from each \(\mathcal{D}_i\) \((i\in[m])\) and outputs $m$ commitments \(\mathsf{com}_i=\mathsf{hash}(x_i,r_i)\) that bind each provider's data using a cryptographic hash function \(\mathsf{hash}:\{0,1\}^*\rightarrow\mathbb{F}_p\), where \(\mathbb{F}_p\) is a finite field that the arithmetic circuit is defined over. During verification, the Verifier confirms that each output of this module is consistent with the commitment that is posted on the public bulletin board. \(\mathsf{Bind}\) module prevents a malicious Data Analyst from substitute arbitrary values for some \(x_i\) and still passing the verification procedure.

\textbf{\(\mathsf{Util}\) and \(\mathsf{SubMin}\) Module} computes the calibrated utility score of \(\mathsf{range}_i\) for \(i\in[n]\). The module implements the utility function given in Definition ~\ref{def:util and rank} and does some calibration to make the highest utility score be always 0. We implement the $\mathsf{Util}$ module with $m \cdot n$ comparators and $n$ \(\mathsf{Abs}\) submodule that calculates the absolute value of the input. After that, each absolute utility score goes into the \(\mathsf{SubMin}\) module, which receives $n$ absolute values of utility scores as the inputs, calculates the minimum of them, and subtracts the minimum value from each inputs to output the final calibrated utility score.

\textbf{\(\mathsf{ExpLookup}\) Module} act as a lookup table for exponentiations. Since our circuit is defined on a finite field \(\mathbb{F}_p\), infinite accuracy on real-valued exponentials is impossible. Therefore, we propose a conceptual method to divide the lookup table into two parts. The first part is a physical lookup table $t$. This is the area that works like a normal lookup table that approximates relatively ``large'' exponentials that can be approximated into an integer modulo $p$. The other part is filled with all 0 (method \textbf{\textit{set0}}) or all $k=\left\lceil\frac{1}{\exp(\varepsilon/2)-1}\right\rceil$ (method \textbf{\textit{setk}}). We call the whole infinite-sized lookup table as the conceptual lookup table $T$.

Let the size of the physical lookup table $t$ be $l$, which means the first $l$ entries of \(T\) are the same as the physical lookup table \(t\). Then, each entry of the conceptual lookup table is filled as follows.
\[
    T[i] =
    \begin{cases}
    \left\lfloor\exp(\frac{\varepsilon}{2})\cdot T[i+1]\right\rfloor & \text{if }i < l-1\\
    k & \text{if }i=l-1\\
    0 & \text{if }i \ge l \text{ and \textbf{\textit{set0}}} \\
    k & \text{if }i \ge l \text{ and \textbf{\textit{setk}}} \\
    \end{cases}
\]
By the definition of $T[i]$, the ratio of each adjacent entries does not exceed $\exp(\varepsilon/2)$ in the physical lookup table, and this property guarantees differential privacy.

\textbf{\(\mathsf{Mod}\) Module} receives sum of randomnesses from the Data Providers and $s_{n-1}$, and outputs \(\rho=\sum_{i\in[n]}r_i \mod s_{n-1}\) where each \(r_i\) is the private randomness of \(\mathcal{D}_i\). The output is guaranteed to be computationally indistinguishable from uniform distribution over $\mathbb{Z}_{s_{n-1}}$ as long as at least one of the Data Providers have sampled their internal randomness uniformly random. The proof of computational indistinguishability is provided in Lemma~\ref{lem:randomness rho is comp}.

\textbf{\(\mathsf{InverseCDF}\) Module} receives cumulative exponential weights \(\{s_i\}_{i\in[n]}\), range of the median query \(\{\mathsf{range}_i\}_{i\in[m]}\), and a randomness seed $\rho$. It returns a differentially private median estimation by performing a scaled version of inverse‐CDF sampling with randomness \(\rho\) which is computationally indistinguishable from uniformly random modulo \(s_{n-1}\).

\section{Security and Utility Analysis}

In this section, we analyze the security and privacy properties, and utility of our proposed mechanism, and provide formal proofs.  The proofs for the theorems and lemmas in this section is provided in the Appendix.

\subsection{Security and Privacy Analysis}\label{sec:Security and Privacy Analysis}

\textbf{Security Definitions and Guarantees}

Let \(\mathsf{VerExp}\left[\{\mathcal{D}_i\}_{i\in[m]}, \mathcal{P},\mathcal{V}\right]\in\{0,1\}\) denote the decision of the verifier \(\mathcal{V}\) when the three parties \(\{\mathcal{D}_i\}_{i\in[m]}\), \(\mathcal{P}\), and \(\mathcal{V}\) participate in the protocol via the pipeline \(\mathsf{VerExpPipeline}\). 0 denotes \(\mathcal{V}\) rejects, and 1 denotes \(\mathcal{V}\) accepts. By ``honest'' means that the party correctly follows the actions described in the protocol. On the other hand, ``malicious'' means that the party does not follow the actions prescribed in the protocol.

\(\mathsf{VerExp}\) satisfies perfect completeness, negligible soundness, and computational zero-knowledge. We propose formal definitions of these properties as below.

\begin{definition}[Perfect Completeness]
    Any honest \(\mathcal{P}\) that correctly followed the scheme generates a proof that always passes the verification.
    Formally,
    \[
        \Pr
        \begin{bmatrix}
            \mathsf{VerExp}\left[\{\mathcal{D}_i\}_{i\in[m]}, \mathcal{P},\mathcal{V}\right]=0 \\
        \end{bmatrix}
        =0
    \]
     where the probability is taken over the randomness of participants.
\end{definition}

\begin{definition}[Negligible Soundness]
    No malicious \(\mathcal{P}^*\) can generate a proof that convince the Verifier except with negligible probability.
    Formally, 
    \[
        \Pr
        \begin{bmatrix}
            \mathsf{VerExp}\left[\{\mathcal{D}_i\}_{i\in[m]}, \mathcal{P}^*,\mathcal{V}\right]=1 \\
        \end{bmatrix}
        \le \mathsf{negl}(\lambda)
    \]
    for a negligible function \(\mathsf{negl}\) and the security parameter \(\lambda\). The probability is taken over the internal randomness of all participants.
\end{definition}

\begin{definition}[Computational Zero-knowledge]
    Upon successful verification, \(\mathcal{V}\) does not learn anything other than the fact that the differentially private median was sampled correctly, and only negligible amount of private information from \(\{\mathcal{D}_i\}_{i\in[m]}\). Formally,
    For every verifier \(\mathcal{V}\), there exists two PPT simulators \(\mathsf{Sim}_{\mathcal{P}\leftrightarrow\mathcal{V}}\) and \(\mathsf{Sim}_{\mathcal{B}\leftrightarrow\mathcal{V}}\) on input \(X\) such that
    \begin{align*}
        \mathsf{Sim}_{\mathcal{P}\leftrightarrow\mathcal{V}}(X)&\overset{c}{\approx} \mathsf{View}_{\mathcal{V}}\left[\mathcal{P}(X,W)\leftrightarrow\mathcal{V}(X)\right] \\
        \mathsf{Sim}_{\mathcal{B}\leftrightarrow\mathcal{V}}(X)&\overset{c}{\approx} \mathsf{View}_{\mathcal{V}}\left[\mathcal{B}(\mathsf{pp}) \leftrightarrow\mathcal{V}(X)\right]
    \end{align*}
    For simplicity of notation, we denote \(X\) and \(W\) as follows.
    \begin{align*}
        X&=(\mathsf{pp}, \mathsf{vk}, \{\mathsf{range}_i\}_{i\in[n]})\\
        W&=(\mathsf{pk}, \{x_i\}_{i\in[m]}, \{r_i\}_{i\in[m]})
    \end{align*}
\end{definition}

\begin{theorem} \label{thm:comp, sound, zk}
    \(\mathsf{VerExp}\) satisfies perfect completeness, negligible soundness, and computational zero-knowledge.
\end{theorem}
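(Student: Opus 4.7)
The plan is to split Theorem~\ref{thm:comp, sound, zk} into three independent arguments, one per property, and reduce each to a combination of (i) a known property of the underlying Groth16 zk-SNARK and the Poseidon hash modeled as a random oracle, and (ii) a semantic correctness statement for the specific circuit we constructed. Since Groth16 already supplies perfect completeness, computational knowledge-soundness in the algebraic group model, and statistical zero-knowledge, the real work lies in matching each of these to our circuit semantics and the commitment layer.

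For \emph{perfect completeness}, I would proceed submodule by submodule, arguing that when an honest Prover uses the true inputs $(x_i,r_i)$ received from the Data Providers, each of \texttt{inputBinder}, \texttt{utility}, \texttt{expLookup}, \texttt{mod}, and \texttt{inverseCDF} outputs exactly the value prescribed by the conceptual specification in Section~4. Chaining these gives that the circuit's public outputs equal a genuine sample $r'\in\mathcal{R}$ drawn according to $\mathcal{M}_{median}$, together with commitments $h_i=\mathrm{hash}(x_i,r_i)$ identical to the bulletin-board values. Groth16's perfect completeness then yields a proof that verifies with probability one.

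For \emph{soundness}, I would take the contrapositive: any proof accepted by the Verifier corresponds, except with negligible probability, to a witness satisfying the circuit, by Groth16 soundness (which in turn rests on the Schwartz--Zippel lemma already stated in the excerpt). From here I would rule out each way a malicious Analyst could still produce a misleading output. First, substituting any $x_i'\neq x_i$ while matching the bulletin-board commitment $h_i$ would require a hash collision, contradicting collision resistance; this is exactly the role of \texttt{inputBinder}. Second, I would argue that the in-circuit arithmetic of \texttt{utility}, \texttt{expLookup} (in both \emph{set0} and \emph{setk} modes), and the prefix sums in \texttt{inverseCDF} force the computed index $j$ to be the unique one with $c_{j-1}\le\rho<c_j$, leaving no room for adversarial choice once $\rho$ is fixed. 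Finally, for the sampling randomness, I would invoke the honest-but-non-colluding assumption on the Data Providers: because \texttt{mod} reduces the sum $\sum_i r_i \bmod c_{n-1}$, a single honest $r_i$ sampled uniformly in $\mathbb{F}_p$ suffices to make $\rho$ statistically close to uniform on $[0,c_{n-1})$, up to the standard modular bias that is negligible since $p\gg c_{n-1}$. Combining these three points shows that the published median is indistinguishable from a fresh draw of $\mathcal{M}_{median}$, as required.

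For \emph{zero-knowledge}, I would construct a PPT simulator $\mathcal{S}$ that, given only the public input $range$, the bulletin-board commitments $\{h_i\}$, and the published DP median $r^*$, produces a transcript indistinguishable from the real one. The simulator invokes Groth16's zero-knowledge simulator on the circuit relation with public inputs $(range, \{h_i\}, r^*)$ to obtain a simulated proof $\pi$; this is possible without knowing any $x_i$ or $r_i$ because Groth16 is a zero-knowledge argument of knowledge. Indistinguishability of the bulletin-board component follows in the random oracle model: each $h_i=\mathrm{hash}(x_i,r_i)$ with $r_i$ uniform in $\mathbb{F}_p$ is distributed as a fresh random oracle value, so it hides $x_i$ perfectly from the Verifier's view. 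Concatenating with Groth16's statistical zero-knowledge gives the required simulation.

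The main obstacle I foresee is the soundness argument around the \texttt{mod} module and the sampling randomness $\rho$: unlike the other submodules, its correctness is not purely a circuit-satisfaction statement but depends on a distributional assumption about at least one Data Provider's randomness. Making this precise requires carefully combining the honest-but-non-colluding model with a modular-bias bound, and verifying that no strategy lets the Analyst bias $\rho$ toward a target index $j$ despite controlling all but one of the $r_i$. Everything else should reduce cleanly to standard cryptographic primitives already invoked in the preliminaries.
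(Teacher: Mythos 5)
Your decomposition into completeness, soundness, and zero-knowledge matches the paper's, and the completeness argument (circuit semantics plus Groth16 completeness) is essentially the paper's one-line argument spelled out submodule by submodule. There are, however, two places where your route diverges from the paper's, and one small gap.

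\emph{Soundness.} The paper reduces soundness to exactly two attack vectors: tampering with the committed inputs $(x_i,r_i)$, blocked by \emph{second-preimage resistance} of the hash (not collision resistance — the attacker is given a fixed target $h_i$, so second-preimage resistance is the precise property), and tampering with the public range, blocked because the Verifier checks it directly. Your proof adds a third item: bounding the bias of $\rho$ produced by \texttt{mod}. But notice that once the witness satisfies the circuit, $\rho = (\sum_i r_i \bmod p) \bmod c_{n-1}$ is \emph{deterministically} forced by the committed $r_i$; the Analyst cannot choose $\rho$ independently without changing some $r_i$, and changing $r_i$ is already the commitment-tampering case. Whether $\rho$ is actually uniform is not a soundness question at all — it is a distributional question belonging to the privacy analysis, and the paper disposes of it by the honest-but-non-colluding assumption on Data Providers (stated in the Security Model and again in the \texttt{mod} module description). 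So the "main obstacle" you flag at the end is not really an obstacle in the soundness proof; it simply lives elsewhere in the paper, and importing it into soundness muddies what should be a purely circuit-satisfaction argument.

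\emph{Zero-knowledge.} You invoke Groth16's zero-knowledge simulator directly on the public inputs $(range, \{h_i\}, r^*)$. The paper instead constructs explicit dummy witnesses $(x_i,r_i)$ consistent with the published commitments and the published median, programs the random oracle so $hash(x_i,r_i) = h_i$, and then runs the \emph{honest} Groth16 prover on those witnesses. Both are valid simulation strategies, but they are not the same. Your approach needs to argue that the simulator's statement is in the relation (otherwise the zk-SNARK simulator has nothing to simulate); the paper sidesteps this by explicitly manufacturing a satisfying witness via oracle programming. If you go your route, you should say a word about why a consistent witness exists in the ROM before calling the zk-SNARK simulator. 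Apart from this, the ROM argument that $h_i$ hides $x_i$ is the same in both proofs.
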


\begin{proof} 
    \textbf{Perfect Completeness}.    
    Assume that \[(\mathsf{med}, \{\mathsf{com}_i\}_{i\in[m]}, \pi)\] was generated honestly by using correct proving key \(\mathsf{pk}\), correct range of the median estimation \(\{\mathsf{range}_i\}_{i\in[n]}\), and a correct witness \(w\) that was honestly generated with inputs \((x_i,r_i)_{i\in[m]}\) from the Data Providers. 
    
    (i) Assuming that the underlying zkSNARK scheme has perfect completeness, the following holds:
    \begin{align*}
        \Pr[accept\leftarrow\mathsf{zkSNARK.Verify}(\mathsf{vk}, \{\mathsf{range}_i\}_{i\in[n]},\mathsf{med}, \pi)]=1
    \end{align*}

    (ii) By the design of our main circuit (specifically the \(\mathsf{Bind}\) module), if the prover honestly used correct inputs \((x_i,r_i)_{i\in[m]}\), the resulting output of the circuit \(\{\mathsf{com}_i\}_{i\in[m]}\) is always identical to the commitments \(\{\mathsf{c}_i\}_{i\in[m]}\) from the Public Bulletin Board.

    Combining (i) and (ii) we get the following.
    \begin{align*}
        \Pr[&accept\leftarrow\mathsf{VerExp.Verify}(\mathsf{vk},\mathsf{med} , \{\mathsf{range}_i\}_{i\in[n]}, \\
        &\;\;\;\;\;\;\;\;\;\;\;\;\;\;\;\;\;\;\;\;\;\;\;\;\;\;\;\;\;\;\;\;\;\;\;
        \{\mathsf{c}_i\}_{i\in[m]}, \{\mathsf{com}_i\}_{i\in[m]}, \pi)]=1
    \end{align*}

    \textbf{Negligible Soundness}. Under the assumption that the underlying zk-SNARKs protocol provides negligible soundness, a malicious \(\mathcal{P}^*\) could cheat by generating a false proof only by doing at least one of the followings:
    \begin{itemize}
        \item Tamper with the inputs from the Data Providers.
        \item Tamper with the range of the query.
    \end{itemize}
    
    The former is computationally infeasible under the assumption of second-preimage resistance of the cryptographic hash function; \(\mathcal{P}^*\) must find another input for the hash that produces the same hash value to pass the verification \(\mathsf{VerExp.Verify}\). 
    
    The latter depends on the soundness of the underlying zk-SNARK protocol; the probability to success such cheating depends on the probability that we use forged public input and succeeds to pass the verification \(\mathsf{zkSNARK.Verify}\). zkSNARK schemes (including Groth16) depend on Schwarts-Zippel lemma to achieve very short proof and verification time. However, the method implies negligible, but nonzero soundness error. Therefore the latter method of cheating is computationally infeasible yet not impossible. Overall, the probability that \(\mathcal{P}^*\) succeeds cheating is negligible.
    
    \textbf{Computational Zero-knowledge}. First, since all existing zk-SNARK protocols depend on cryptographic assumptions, the verification key  \(\mathsf{vk}\) and the proof \(\pi\) is computationally zero-knowledge.
    Also, under random oracle model, the distribution of the commitment \(\mathsf{com}_i\) to private data \(x_i\) from \(\mathcal{D}_i\) is uniform. Therefore, no information about \(x_i\) can be leaked here. (Of course, if we only use the assumption of cryptographic hash, \(\{\mathsf{com}_i\}_{i\in[m]}\) would be only computationally zero-knowledge.) 

    For a formal proof, we can construct a PPT simulator \(\mathsf{Sim}_{\mathcal{P}\leftrightarrow\mathcal{V}}\) and \(\mathsf{Sim}_{\mathcal{B}\leftrightarrow\mathcal{V}}\) under random oracle model as Table~\ref{table:Sim_PV} and Table~\ref{table:Sim_BV}. Recall that the public input and public output are \[\{\mathsf{range}_i\}_{i\in[n]}, \{\mathsf{com}_i\}_{i\in[m]}, \text{ and } \mathsf{med}\]

    \begin{table}[h]
      \centering
      \begin{tabular}{p{0.95\linewidth}}
        \toprule
            \(\mathsf{Sim}_{\mathcal{B}\leftrightarrow\mathcal{V}}(\{\mathsf{range}_i\}_{i\in[n]}, \{\mathsf{com}_i\}_{i\in[m]}, \mathsf{med})\) \\ 
        \midrule
        \begin{minipage}{\linewidth}
          \begin{algorithmic}[1]
            \FOR{\(i\in[m]\)}
                \IF{\(0\le i \le m-2\)}
                    \STATE Sample \(x_i\) from \(\{\mathsf{range}_i\}_{i\in[n]}\)
                    \STATE Sample \(r_i \overset{\$}{\leftarrow} \mathcal{U}(\mathbb{F}_p)\)
                    \STATE Program \(\mathsf{hash}(x_i||r_i)=\mathsf{com}_i\)
                \ELSE
                    \STATE Sample \(x_{m-1}\) from \(\{\mathsf{range}_i\}_{i\in[n]}\)
                    \STATE \(\{\mathsf{util}'_i\}_{i\in[n]}\leftarrow\mathsf{Util}(\{\mathsf{range}_i\}_{i\in[n]},\{x_i\}_{i\in[m]})\)
                    \STATE \(\mathsf{expval}_0\leftarrow\mathsf{ExpLookup}(\mathsf{util}'_0)\)
                    \STATE \(s_0\leftarrow\mathsf{expval}_0\)
                    \FOR{\(1\le i \le n-1\)}
                        \STATE \(\mathsf{expval}_i\leftarrow\mathsf{ExpLookup}(\mathsf{util}'_i)\)
                        \STATE \(s_i\leftarrow s_{i-1}+\mathsf{expval}_i\)
                    \ENDFOR
                    \STATE Find \(\alpha\) such that \(\mathsf{range}_\alpha=\mathsf{med}\)
                    \STATE Set \(r_{m-1}\) such that 
                    \[s_{\alpha}\le\left(\sum_{j=0}^{m-1}r_j \mod{p}\right) \mod{s_{n-1}}\le s_{\alpha+1}\]
                    \STATE Program \(\mathsf{hash}(x_{m-1}||r_{m-1})=\mathsf{com}_{m-1}\)
                \ENDIF
            \ENDFOR
            \STATE Send \(\{\mathsf{com}_i\}_{i\in[m]}\) to \(\mathcal{V}\)
          \end{algorithmic}
        \end{minipage}
        \\ \bottomrule
      \end{tabular}
      \caption{Simulator \(\mathsf{Sim}_{\mathcal{B}\leftrightarrow\mathcal{V}}\)}
      \label{table:Sim_BV} 
    \end{table}

\FloatBarrier

    \begin{table}[h]
      \centering
      \begin{tabular}{p{0.95\linewidth}}
        \toprule
            \(\mathsf{Sim}_{\mathcal{P}\leftrightarrow\mathcal{V}}(\{\mathsf{range}_i\}_{i\in[n]}, \{\mathsf{com}_i\}_{i\in[m]}, \mathsf{med})\) \\ 
        \midrule
        \begin{minipage}{\linewidth}
          \begin{algorithmic}[1]
            \STATE Receive \(\{x_i,r_i\}_{i\in[m]}\) from \(\mathsf{Sim}_{\mathcal{B}\leftrightarrow\mathcal{V}}\)
            \STATE \(w \leftarrow \mathsf{WGen}(\mathcal{C}, \{(x_i,r_i)\}_{i\in[m]})\)
            \STATE Run \(\mathsf{VerExp.Prove}(\mathsf{pk}, \{\mathsf{range}_i\}\) to obtain
            \[
                (\mathsf{med}, \{\mathsf{com}_i\}_{i\in[m]}, \pi)
            \]
            \STATE Send \((\mathsf{med}, \{\mathsf{com}_i\}_{i\in[m]}, \pi)\) to \(\mathcal{V}\)
          \end{algorithmic}
        \end{minipage}
        \\ \bottomrule  
      \end{tabular}
      \caption{Simulator \(\mathsf{Sim}_{\mathcal{P}\leftrightarrow\mathcal{V}}\)}
      \label{table:Sim_PV} 
    \end{table}

\end{proof}

\textbf{Privacy Guarantees}

We now prove that each method for filling the lookup table ensures that the resulting median estimation satisfies computational differential privacy.

$T$ is the conceptual lookup table. We define \[\text{OPT}_u(\mathcal{DB})= \text{max}_{r\in\mathcal{R}}u(\mathcal{DB},r)\] to be the maximum possible utility score from a database $\mathcal{DB}$. Let $U_\mathcal{DB}$ be the table of the exponentiated and calibrated utility scores of each element in the range of the query under a database $\mathcal{DB}$. i.e. $U_\mathcal{DB}$ is the table of un-normalized probability that each element is picked as a DP median. We assume $\sum_{r \in \mathcal{R}}U_\mathcal{DB}[r]=N$. Note that each entry in $U_\mathcal{DB}$ is calculated as follows. 
\begin{align*}
    U_\mathcal{DB}[r] = T[\text{OPT}_u(\mathcal{DB})-u(\mathcal{DB},r)]
\end{align*}

\begin{theorem} \label{thm:set0_approxDP}
    \textbf{\textit{set0}} satisfies $(\varepsilon,\delta)$-computational DP for $\delta=e^{\frac{\varepsilon}{2}}\cdot\frac{k}{N}+\frac{s_{n-1}}{2p}$, where $k=\left\lceil\frac{1}{\exp(\varepsilon/2)-1}\right\rceil$.
\end{theorem}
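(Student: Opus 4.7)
My plan is to establish $(\varepsilon,\delta)$-DP by the standard decomposition into a ``safe'' part where a pure $\exp(\varepsilon)$ multiplicative ratio holds and an ``exceptional'' part whose total mass is at most $\delta$. The first step is to leverage Lemma~\ref{lem:sensitivity of utility func.}, which states $\Delta u = 1$: for adjacent databases $\mathcal{D} \sim \mathcal{D}'$ and every $r \in \mathcal{R}$, we have $|u(\mathcal{D},r) - u(\mathcal{D}',r)| \leq 1$, and consequently $|\text{OPT}_u(\mathcal{D}) - \text{OPT}_u(\mathcal{D}')| \leq 1$. Setting $g_\mathcal{D}(r) := \text{OPT}_u(\mathcal{D}) - u(\mathcal{D},r)$, the triangle inequality gives $|g_\mathcal{D}(r) - g_{\mathcal{D}'}(r)| \leq 2$ for every $r$. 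The recursion $T[i] = \lfloor \exp(\varepsilon/2)\cdot T[i+1] \rfloor$ yields $T[i] \leq \exp(\varepsilon/2)\cdot T[i+1]$ throughout the physical region, and iterating this at most twice delivers $T[g_\mathcal{D}(r)] \leq \exp(\varepsilon)\cdot T[g_{\mathcal{D}'}(r)]$ whenever both shift indices lie in $[0,l)$.

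Next I would partition $S \subseteq \mathcal{R}$ into $S = S_{\text{safe}} \sqcup S_{\text{bad}}$, where $S_{\text{bad}} := \{r \in S : Util_\mathcal{D}[r] > 0 \text{ and } Util_{\mathcal{D}'}[r] = 0\}$. For $r \in S_{\text{safe}}$, either $Util_\mathcal{D}[r] = 0$ (no contribution to $\Pr[\mathcal{M}(\mathcal{D})\in S]$) or both shift indices satisfy $g_\mathcal{D}(r), g_{\mathcal{D}'}(r) < l$, so the previous bound gives the per-element inequality $Util_\mathcal{D}[r] \leq \exp(\varepsilon)\cdot Util_{\mathcal{D}'}[r]$. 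Summing over $S_{\text{safe}}$ against the common normalizer $N$ delivers the multiplicative part of the DP guarantee. For $r \in S_{\text{bad}}$, the conditions $g_\mathcal{D}(r) < l \leq g_{\mathcal{D}'}(r)$ together with $|g_\mathcal{D}(r) - g_{\mathcal{D}'}(r)| \leq 2$ force $g_\mathcal{D}(r) \in \{l-2, l-1\}$, so $Util_\mathcal{D}[r] \leq T[l-2] \leq \exp(\varepsilon/2)\cdot k$, from which I would conclude $\Pr[\mathcal{M}(\mathcal{D}) \in S_{\text{bad}}] \leq \exp(\varepsilon/2)\cdot k/N = \delta$ and combine with the safe estimate to finish.

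The main obstacle is this final step: the per-element cap $T[l-2] \leq \exp(\varepsilon/2)\cdot k$ already equals the claimed $\delta N$, so bounding the aggregate bad-set mass forces one to argue that $S_{\text{bad}}$ essentially contributes only a single borderline contribution. The cleanest route is to exploit the V-shape of the median utility: only the one side of the tent whose optimal index shifts upward by one under the record change can push outputs across the physical boundary, and one can show that at the boundary level the set of newly-extinguished outputs has combined $T$-weight at most $T[l-2]$ (since their $\mathcal{D}'$-side ``partners'' at index $l-1$ would otherwise have had strictly larger total weight, contradicting the definition of $l$). A secondary subtlety to manage is that the normalizers $N_\mathcal{D}$ and $N_{\mathcal{D}'}$ need not be literally equal; I would handle this by showing the two sums differ by at most $O(k)$ (entries can only cross the physical boundary in bulk of size bounded by the same $k$), which gets absorbed cleanly into the stated $\delta$.
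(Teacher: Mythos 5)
Your decomposition into a ``safe'' region with an $\exp(\varepsilon)$ multiplicative ratio plus a thin ``bad'' region of mass at most $\delta$ is exactly the paper's approach; the paper instantiates the bad region as the four borderline indices $\{ind_l,\;ind_l+1,\;ind_r-1,\;ind_r\}$ flanking the cutoff where the physical table becomes zero. Your accounting $|g_\mathcal{D}(r)-g_{\mathcal{D}'}(r)|\le 2$ is actually \emph{sharper} than what the paper states: the paper's chain bounds the numerator ratio and the normalizer ratio by $\exp(\varepsilon/2)$ each, which is correct only for a unit shift of the lookup index, and it silently ignores the extra $\pm 1$ shift that the $\text{OPT}_u$ recalibration can introduce when the optimum moves between $\mathcal{D}$ and $\mathcal{D}'$.

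However, the step you flag as ``the main obstacle'' is a genuine gap, and your proposed resolution does not close it. The claim that ``the set of newly-extinguished outputs has combined $T$-weight at most $T[l-2]$'' is asserted rather than derived; the parenthetical about $\mathcal{D}'$-side partners contradicting the definition of $l$ does not cash out into a valid argument. The tent shape of the median utility does \emph{not} forbid a plateau in $\text{rank}_\mathcal{D}(\cdot)$ across many consecutive candidates all sitting at shift index $l-1$ with weight $k$; a single record change can push them all past $l$ simultaneously, making the cumulative bad-set mass $|S_{\text{bad}}|\cdot k$ rather than $\exp(\varepsilon/2)\cdot k$. To be clear, the same hole is latent in the paper's own proof, which bounds $\Pr[r,\mathcal{D}]/\Pr[r,\mathcal{D}']$ pointwise rather than over arbitrary events $S\subseteq\mathcal{R}$ (which is what $(\varepsilon,\delta)$-DP requires) and assumes the borderline band has constant width; you have named the hole more honestly than the paper fills it, but you have not supplied the missing argument. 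Your secondary concern about $N_\mathcal{D}\neq N_{\mathcal{D}'}$ is likewise real, and your proposed $O(k)$ bound on their difference would fail for the same reason---a wide plateau at the top of the tent lets the normalizer swing by far more than $O(k)$ when the optimum shifts.
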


\begin{proof}
    We first prove the differential privacy of \textbf{\textit{set0}} with assumption that the randomness \(\rho\) is uniform over \(\mathbb{Z}_{s_{n-1}}\). Then combining Lemma~\ref{lem:randomness rho is comp} and Lemma~\ref{lem:randomness change}, we obtain the proof of computational differential privacy.
    
    Let $l$ be the size of the ``physical'' lookup table, which is an integer such that $T[i]=0$ for $i \ge l$ and $T[i] > 0$ for $i < l$. Let $\mathsf{range}$ be the array of the all the elements in range of the median estimation. $\mathsf{range}[a:b]$ is the subarray of $\mathsf{range}$ that includes from $\mathsf{range}[a]$ to $\mathsf{range}[b]$. Let $\alpha, \beta\in [n]$ be indices of the array $\mathsf{range}$ that satisfy the following condition.
    \begin{align*}
        \forall j\in [0,n)&,\\ j \le &\alpha \text{ or } j \ge \beta \implies u(\mathcal{DB}, \mathsf{range}[j]) \le -l
    \end{align*}
    Let $\mathcal{DB}$ and $\mathcal{DB}'$ be two neighboring databases and $\Pr[r,\mathcal{DB}]$ be the probability that our mechanism run over the database \(\mathcal{D}\) outputs $r$. \(U_{\mathcal{DB}}\) is defined in Section~\ref{sec:Security and Privacy Analysis}.
    
    (i) For $r'\in \mathsf{range}[\alpha+2:\beta-2]$,
    \begin{align*}
        \frac{\Pr[r', \mathcal{DB}]}{\Pr[r', \mathcal{D}']} &= \frac{\frac{U_\mathcal{DB}[r']}{\sum_{r \in \mathcal{R}}U_\mathcal{DB}[r]}}{\frac{U_\mathcal{DB'}[r']}{\sum_{r\in\mathcal{R}}U_\mathcal{DB'}[r]}} \\
        &= \frac{U_\mathcal{DB}[r']}{U_\mathcal{DB'}[r']} \cdot \frac{\sum_{r\in\mathcal{R}}U_\mathcal{DB}[r]}{\sum_{r\in\mathcal{R}}U_\mathcal{DB'}[r]} \\
        &\le e^{\frac{\varepsilon}{2}} \cdot e^{\frac{\varepsilon}{2}} \\
        &= e^{\varepsilon}
    \end{align*}
    By symmetry, we can get the following inequality.
    \begin{align*}
        \frac{\Pr[r', \mathcal{DB}]}{\Pr[r', \mathcal{DB}']} &\ge e^{-\varepsilon} \text{ for }r'\in \mathsf{range}[\alpha+2:\beta-2]
    \end{align*}
   
    (ii) For $r'\in \mathsf{range}[0:\alpha-1]\cup \mathsf{range}[\beta+1:n-1]$,
    \begin{align*}
        \Pr[r',\mathcal{DB}]&=\frac{U_\mathcal{DB}[r']}{\sum_{r\in \mathcal{R}}U_\mathcal{DB}[r]}=0
        \\
        \Pr[r',\mathcal{DB}']&=\frac{U_\mathcal{DB'}[r']}{\sum_{r\in \mathcal{R}}U_\mathcal{DB'}[r]}=0
    \end{align*}
    since $U_\mathcal{DB}[r']=U_\mathcal{DB'}[r']=0$.
        
    (iii) For $r' \in \mathsf{range}[\alpha:\alpha+1]\cup \mathsf{range}[\beta:\beta-1]$,
    \begin{align*}
        &U_\mathcal{DB'}[r'] \in [e^{-\frac{\epsilon}{2}}U_\mathcal{DB}[r'], e^{\frac{\varepsilon}{2}}U_{\mathcal{DB}}[r']]\;\;\;\text{ or } \\
        &U_\mathcal{DB'}[r'] \in [U_\mathcal{DB}[r']-k, U_\mathcal{DB}[r']+k]
    \end{align*}
    since $U_\mathcal{DB}(\mathsf{range}[\alpha])=U_\mathcal{DB}(\mathsf{range}[\beta])=0$ while $U_\mathcal{DB}(\mathsf{range}[\alpha+1])=U_\mathcal{DB}(\mathsf{range}[\beta-1])=k$.
    
    Combining above, we get the following result.
    \begin{align*}
        \forall r\in \mathcal{R},\Pr[r,\mathcal{DB}] \le e^{\varepsilon}\Pr&[r, \mathcal{DB}']+\delta \\ &\text{ for }\delta \le e^{\frac{\varepsilon}{2}} \cdot \frac{k}{N}
    \end{align*}
\end{proof}

\begin{theorem} \label{thm:setk_pureDP}
    \textbf{\textit{setk}} satisfies $(\varepsilon, \frac{s_{n-1}}{2p})$-computational DP.
\end{theorem}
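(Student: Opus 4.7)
The plan is to verify that the sampling rule $\Pr[r, \mathcal{D}] \propto T[i_\mathcal{D}(r)]$ with $i_\mathcal{D}(r) := \text{OPT}_u(\mathcal{D}) - u(\mathcal{D}, r)$ satisfies the same $(\varepsilon, 0)$-DP bound as the ideal exponential mechanism. I would follow the standard exponential-mechanism proof strategy, but replace the exact exponential weights with the discretized weights from the \textbf{\textit{setk}} table and verify that a one-step log-Lipschitz property plus monotonicity of $T$ suffice. The calibration by $\text{OPT}_u$ must be handled with some care, via a brief case split on the sign of $\alpha := \text{OPT}_u(\mathcal{D}) - \text{OPT}_u(\mathcal{D}')$.

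First I would establish that $1 \leq T[i]/T[i+1] \leq \exp(\varepsilon/2)$ for every $i \geq 0$. The upper bound is immediate from $T[i] = \lfloor \exp(\varepsilon/2) T[i+1] \rfloor$ on the physical portion and from $T[i] = T[i+1] = k$ on the tail. The lower bound (monotonicity) is exactly where the choice $k = \lceil 1/(\exp(\varepsilon/2) - 1) \rceil$ matters: an easy induction shows $T[i] \geq k$ everywhere, so $(\exp(\varepsilon/2) - 1) T[i+1] \geq 1$ and the floor cannot drop $T[i]$ below $T[i+1]$. Chaining these one-step bounds yields the two-sided log-Lipschitz inequality $T[i]/T[j] \leq \exp(\varepsilon|i-j|/2)$ for all non-negative $i, j$, together with the sharper one-sided bound $T[i] \leq T[j]$ whenever $i \geq j$.

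For adjacent databases $\mathcal{D} \sim \mathcal{D}'$, I would write $\alpha \in \{-1, 0, 1\}$ as above and $\gamma(r) := u(\mathcal{D}, r) - u(\mathcal{D}', r) \in \{-1, 0, 1\}$, so that $i_\mathcal{D}(r) - i_{\mathcal{D}'}(r) = \alpha - \gamma(r)$. Expanding the privacy loss as
\[
 \frac{\Pr[r^*, \mathcal{D}]}{\Pr[r^*, \mathcal{D}']} \;=\; \underbrace{\frac{T[i_\mathcal{D}(r^*)]}{T[i_{\mathcal{D}'}(r^*)]}}_{A} \;\cdot\; \underbrace{\frac{\sum_r T[i_{\mathcal{D}'}(r)]}{\sum_r T[i_\mathcal{D}(r)]}}_{B}
\]
and dominating $B$ by $\max_r T[i_{\mathcal{D}'}(r)]/T[i_\mathcal{D}(r)]$ via the averaging inequality $\sum a / \sum b \leq \max a/b$, I would split on $\alpha$: when $\alpha = +1$ every shift $1 - \gamma(r)$ is non-negative, so monotonicity of $T$ gives $A \leq 1$ while the log-Lipschitz bound gives $B \leq \exp(\varepsilon)$; the case $\alpha = -1$ is symmetric ($A \leq \exp(\varepsilon)$, $B \leq 1$); and when $\alpha = 0$ every shift lies in $\{-1, 0, 1\}$, making both $A$ and $B$ at most $\exp(\varepsilon/2)$. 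In all three cases the product is at most $\exp(\varepsilon)$.

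The main obstacle is precisely this last step: since $|i_\mathcal{D}(r) - i_{\mathcal{D}'}(r)|$ can reach $2$, applying the log-Lipschitz bound blindly to both factors would yield the useless bound $\exp(2\varepsilon)$. The remedy is to recognize that $\alpha$ is a constant in $r$, so it dictates the common sign of every shift, and monotonicity of $T$ then collapses one of the two factors---this is the discrete analogue of the cancellation of the prefactor $\exp(-\varepsilon\,\text{OPT}_u(\mathcal{D})/2)$ between numerator and denominator in the ideal exponential mechanism. Once the pointwise bound $\Pr[r^*, \mathcal{D}] \leq \exp(\varepsilon) \Pr[r^*, \mathcal{D}']$ is established, summing over any $S \subseteq \mathcal{R}$ gives the set-wise form of $(\varepsilon, 0)$-DP.
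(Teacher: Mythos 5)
Your proof is correct, and it is in fact more careful than the paper's. The paper's argument asserts the pointwise bound $Util_{\mathcal{D}'}[r'] \in \left[\exp\left(-\tfrac{\varepsilon}{2}\right) Util_{\mathcal{D}}[r'],\ \exp\left(\tfrac{\varepsilon}{2}\right) Util_{\mathcal{D}}[r']\right]$ directly from the table-ratio property and sensitivity $1$, then multiplies two $\exp(\varepsilon/2)$ factors. But this pointwise claim does not follow as stated: since $Util_{\mathcal{D}}[r] = T[\mathrm{OPT}_u(\mathcal{D}) - u(\mathcal{D},r)]$ and both $\mathrm{OPT}_u$ and $u(\cdot,r)$ can move by $1$ across adjacent databases, the table index can shift by $2$, making the pointwise ratio as large as $\exp(\varepsilon)$. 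Your case split on $\alpha = \mathrm{OPT}_u(\mathcal{D}) - \mathrm{OPT}_u(\mathcal{D}')$ is exactly what is needed to repair this: because $\alpha$ is constant in $r$, when $\alpha \neq 0$ every shift $\alpha - \gamma(r)$ has the same sign, so monotonicity of $T$ forces one of the two factors $A$, $B$ to be $\le 1$, absorbing the worst-case $\exp(\varepsilon)$ on the other; when $\alpha = 0$ both factors are individually at most $\exp(\varepsilon/2)$. This sign-cancellation is the discrete analogue of the $\mathrm{OPT}_u$ prefactor cancelling in the exact exponential mechanism, and it is the genuine content missing from the paper's argument. Your approach is the one that actually closes the proof.

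One small slip: you attribute monotonicity $T[i] \ge T[i+1]$ to the choice $k = \left\lceil 1/(\exp(\varepsilon/2)-1) \right\rceil$, but monotonicity is automatic --- $\lfloor \exp(\varepsilon/2)\,T[i+1] \rfloor \ge T[i+1]$ holds for any non-negative integer $T[i+1]$ simply because $\exp(\varepsilon/2) \ge 1$. The role of $k$ in \textbf{\textit{setk}} is to keep every entry strictly positive (so the ratios are well-defined and the tail contributes nonzero mass); it is not needed for the one-step ratio bounds. This does not affect the rest of your argument.
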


\begin{proof}
    We first prove the differential privacy of \textbf{\textit{setk}} with assumption that the randomness \(\rho\) is uniform over \(\mathbb{Z}_{s_{n-1}}\). Then combining Lemma~\ref{lem:randomness rho is comp} and Lemma~\ref{lem:randomness change}, we obtain the proof of computational differential privacy.
    
    All adjacent entries in $U_\mathcal{DB}$ differ by ratios bounded above by $\exp(\varepsilon/2)$ and the sensitivity of the utility function is $1$. Therefore, for all $r'\in\mathcal{R}$ and neighboring databases $\mathcal{DB}, \mathcal{DB}'$, the following holds.
    \begin{align*}
        U_\mathcal{DB'}[r'] \in [e^{-\frac{\epsilon}{2}}U_\mathcal{DB}[r'], e^{\frac{\varepsilon}{2}}U_{\mathcal{DB}}[r']]
    \end{align*}
    By leveraging this property, we get
    \begin{align*}
        \frac{\Pr[\mathsf{range}[i], \mathcal{DB}]}{\Pr[\mathsf{range}[i], \mathcal{DB}']} &= \frac{\frac{U_\mathcal{DB}[i]}{\sum_{j=0}^{k-1}U_\mathcal{DB}[j]}}{\frac{U_\mathcal{DB'}[i]}{\sum_{j=0}^{k-1}U_\mathcal{DB'}[j]}} \\
        &= \frac{U_\mathcal{DB}[i]}{U_\mathcal{DB'}[i]} \cdot \frac{\sum_{j=0}^{k-1}U_\mathcal{DB}[j]}{\sum_{j=0}^{k-1}U_\mathcal{DB'}[j]} \\
        &\le e^{\frac{\varepsilon}{2}} \cdot e^{\frac{\varepsilon}{2}} \\
        &= e^{\varepsilon}
    \end{align*}
    By symmetry, we can also get the following inequality.
    \begin{align*}
        \frac{\Pr[\mathsf{range}[i], \mathcal{DB}]}{\Pr[\mathsf{range}[i], \mathcal{DB}']} &\ge e^{-\varepsilon}
    \end{align*}
    Finally, we get the following result.
    \begin{align*}
        \Pr[\text{range}[i],\mathcal{DB}] \le e^{\varepsilon}\Pr[\mathsf{range}[i], \mathcal{DB}']
    \end{align*}
\end{proof}

\subsection{Utility Analysis}

In this section, we provide the utility analysis of our scheme. We bound the probability of returning a bad estimation (i.e. an estimation that is far from the true median) and show that such probability is exponentially small. Let $\mathcal{R}_{\text{OPT}}=\{r\in\mathcal{R}:u(\mathcal{DB},r)=\text{OPT}_u(\mathcal{DB})\}$. 

First, we model the entries of the lookup table as a sequence $A[i]$ and the true exponentially decaying sequence as $B[i]$. $B[i]=N \cdot \exp(-\frac{\varepsilon}{2}\cdot i)$ and $A[i]$ is filled the same way the lookup table is filled. Ideally, $A[i]$ should be equal to $B[i]$, which is an exactly exponentially decaying sequence. However, since $A[i]$ is an integer approximation of exponential decay such that each adjacent elements should only differ up to the multiplicative factor of $e$, we should bound the error derived from this approximation.

Now, we provide the utility proof of \textbf{\textit{set0}} and \textbf{\textit{setk}}.

\begin{theorem} \label{thm:set0_utility}
    (Utility of \textbf{\textit{set0}}) Let $x\in\mathcal{R}$ be the output of our scheme on database $\mathcal{D}$. Let the conceptual lookup table be $T$. Assume that $l$ is the first index $i$ such that $T[i]=0$. \\
    If \(c>-l\),
    {
        \begin{align*}
            \Pr[u&(\mathcal{DB},x)\le c] \\ \le& \frac{\left|\mathcal{R}\right|}{\left|\mathcal{R}_{\textup{OPT}}\right|} \left( \exp \left(\frac{\varepsilon\left(c-\textup{OPT}_u(\mathcal{DB})\right)}{2}\right)+\frac{e^{\varepsilon/2}}{N (e^{\varepsilon/2}-1)} \right)
            + \frac{s_{n-1}}{4p}
        \end{align*}
    }
    If \(c\le -l\), 
    \begin{align*}
        \Pr[u(\mathcal{DB},x)\le c] = 0
    \end{align*}
\end{theorem}

\begin{proof}
    Similar to the method of proof of Theorem~\ref{thm:set0_approxDP} and Theorem~\ref{thm:setk_pureDP}, we first prove the theorem under the assumption that the randomness \(\rho\) is uniformly random. Then combining Lemma~\ref{lem:randomness rho is comp} and Lemmma~\ref{lem:randomness switch bad prob}, we get the utility proof for our computational DP scheme.
    
    As defined in the proof of Theorem~\ref{thm:set0_approxDP}, \(l\) denotes the size of the ``physical'' lookup table. For $c>-l$ and $r\in \mathcal{R}$ such that \(u(\mathcal{DB},r)\le c\), the unnormalized probability mass that our scheme outputs $r$ is at most 
    \[N\exp\left(\frac{\varepsilon (c-\text{OPT}_u(\mathcal{DB}))}{2}\right)+\frac{e^{\varepsilon/2}}{e^{\varepsilon/2}-1}\] 
    since the approximation of the scaled exponential values are accurate up to additive error of \(\frac{e^{\varepsilon/2}}{e^{\varepsilon/2}-1}\) until $T[l-1]$ due to Lemma~\ref{lem: approx err}.
    
    There can be at most $|\mathcal{R}|$ ``bad'' elements that lies beyond the utility score of $c$. Moreover, by the design of the \(\mathsf{SubMin}\) module, the highest calibrated utility score is always 0. Therefore by the definition of $|\mathcal{R}_{\text{OPT}}|$, there are exactly $|\mathcal{R}_{\text{OPT}}|$ elements in $\mathcal{R}$ that have an unnormalized probability mass of $N$. Combining these facts leads to the following inequality.
    \begin{align*}
        \Pr[u&(\mathcal{DB},x)\le c] \\ 
        &\le \frac{\left|\mathcal{R}\right|\left(N\exp\left(\frac{\varepsilon (c-\text{OPT}_u(\mathcal{DB}))}{2}\right) + \frac{e^{\varepsilon/2}}{e^{\varepsilon/2}-1}\right)} {N\left|\mathcal{R}_{\text{OPT}}\right|} \\
        &= \frac{\left|\mathcal{R}\right|}{\left|\mathcal{R}_{\text{OPT}}\right|}  \left( \exp \left(\frac{\varepsilon\left(c-\text{OPT}_u(\mathcal{DB})\right)}{2}\right)+\frac{e^{\varepsilon/2}}{N (e^{\varepsilon/2}-1)} \right)
    \end{align*}
    The second inequality follows from the fact that the entries of the conceptual lookup table after $T[l]$ are set to $0$.
\end{proof}

\begin{theorem} \label{thm:setk_utility}
    (Utility of \textbf{\textit{setk}}) Let $x\in \mathcal{R}$ be the output of our scheme. Let the conceptual lookup table be $T$. Then,
    \begin{align*}
        \Pr[u&(\mathcal{DB},x)\le c] \\ \le& \frac{\left|\mathcal{R}\right|}{\left|\mathcal{R}_{\textup{OPT}}\right|} \left( \exp \left(\frac{\varepsilon\left(c-\textup{OPT}_u(\mathcal{DB})\right)}{2}\right)+\frac{e^{\varepsilon/2}}{N(e^{\varepsilon/2}-1)} \right)
        + \frac{s_{n-1}}{4p}
    \end{align*}
    
\end{theorem}

\begin{proof}
    Similar to the method of proof of Theorem~\ref{thm:set0_approxDP} and Theorem~\ref{thm:setk_pureDP}, we first prove the theorem under the assumption that the randomness \(\rho\) is uniformly random. Then combining Lemma~\ref{lem:randomness rho is comp} and Lemmma~\ref{lem:randomness switch bad prob}, we get the utility proof for our computational DP scheme.

    For $r\in \mathcal{R}$ such that \(u(\mathcal{DB},r)\le c\), the unnormalized probability mass that our scheme outputs $r$ is at most \[N\exp\left(\frac{\varepsilon (c-\text{OPT}_u(\mathcal{D}))}{2}\right)+\frac{e^{\varepsilon/2}}{e^{\varepsilon/2}-1}\] due to Lemma~\ref{lem: approx err}. 
    
    There can be at most $|\mathcal{R}|$ ``bad'' elements that lies beyond the utility score of $c$. Moreover, by the design of the \(\mathsf{SubMin}\) module, the highest calibrated utility score is always 0. By the definition of $|\mathcal{R}_{\text{OPT}}|$, there are exactly $|\mathcal{R}_{\text{OPT}}|$ elements in $\mathcal{R}$ that possesses an unnormalized probability mass of $N$. Combining these facts leads to the following inequality.
    \begin{align*}
        \Pr[u&(\mathcal{DB},x)\le c] \\ 
        &\le \frac{\left|\mathcal{R}\right|\left(N\exp\left(\frac{\varepsilon (c-\text{OPT}_u(\mathcal{DB}))}{2}\right) +\frac{e^{\varepsilon/2}}{e^{\varepsilon/2}-1}\right)} {N\left|\mathcal{R}_{\text{OPT}}\right|} \\
        &= \frac{\left|\mathcal{R}\right|}{\left|\mathcal{R}_{\text{OPT}}\right|} \left( \exp \left(\frac{\varepsilon\left(c-\text{OPT}_u(\mathcal{DB})\right)}{2}\right)+\frac{e^{\varepsilon/2}}{N (e^{\varepsilon/2}-1)} \right)
    \end{align*}
\end{proof}

\section{Experiments}

All experiments were executed on a workstation equipped with an Intel Core i7-14700F CPU (20 cores / 28 threads) and 32 GB DDR4 RAM. Workloads ran inside Windows Subsystem for Linux 2 (WSL 2) on Windows 11 Pro 22H2, with Ubuntu 24.04.1 LTS. The WSL was constrained to 24 logical processors and 30 GB RAM, with 64 GB swap space. All Circom compilations and Groth16 proof routines used Circom 2.2.2 and SnarkJS v0.7.5 under Node.js v22.15.0. 

We used Groth16 as the underlying zk-SNARK protocol for the key generation, witness generation, proof generation and verification procedure. Poseidon hash function is used for the commitment that binds the inputs of the Data Providers. Our lookup table consists of 128 regular entries and the rest of the entries are set to 0 or $k=\left\lceil\frac{1}{\exp(\varepsilon/2)-1}\right\rceil$ depending on the method \textbf{\textit{set0}} and \textbf{\textit{setk}}. 

Table~\ref{table:Experiment Result e0.5d0}--\ref{table:Experiment Result e1d1} and Figure~\ref{fig:experimental result e0.5d0}--\ref{fig:experimental result e1d1} report the total runtime of our scheme, configured with an exponentiation lookup table to satisfy $(\varepsilon=0.5,\delta=0), (\varepsilon=1,\delta=0), (\varepsilon=0.5,\delta=e^{-31.5}), \text{ and } (\varepsilon=0.5,\delta=e^{-63})$-DP. The measured times include witness generation time \(t_w\), proof generation time \(t_p\), and verification time \(t_v\) for input sizes \(m\) ranging from 1,000 to 7,000. The domain of the median query were fixed to $\mathcal{R}=\{\mathsf{range}_i\}_{i\in[n=100]}=[0,n-1=99]$ in all test cases. Each benchmark was executed 50 times, and the mean values are presented.

\begin{table}[h!]
  \centering
  \begin{tabular}{rrrr}
    \toprule
    \(m\)         & \(t_w\)      & \(t_p\)      & \(t_v\)            \\
    \midrule
    1000               & $0.694$ s        & $17.984$ s       & $16.613$ ms       \\
    2000               & $1.024$ s        & $29.489$ s       & $29.196$ ms       \\
    3000               & $1.446$ s        & $76.886$ s       & $87.684$ ms       \\
    4000               & $2.307$ s        & $117.781$ s      & $166.876$ ms      \\
    5000               & $2.498$ s        & $208.001$ s      & $238.317$ ms      \\
    6000               & $3.149$ s        & $217.078$ s      & $273.572$ ms      \\
    7000               & $3.813$ s        & $259.031$ s      & $277.333$ ms      \\
    \bottomrule
  \end{tabular}
  \caption{Experimental Result $(\varepsilon=0.5, \delta=0)$}
  \label{table:Experiment Result e0.5d0}
\end{table}

\begin{table}[h!]
  \centering
  \begin{tabular}{rrrr}
    \toprule
    \(m\)              & \(t_w\)          & \(t_p\)          & \(t_v\)           \\
    \midrule
    1000               & $0.541$ s        & $14.246$ s       & $19.400$ ms       \\
    2000               & $1.123$ s        & $27.566$ s       & $26.598$ ms       \\
    3000               & $1.708$ s        & $77.733$ s       & $104.073$ ms      \\
    4000               & $2.026$ s        & $121.496$ s      & $146.208$ ms      \\
    5000               & $2.630$ s        & $206.174$ s      & $272.224$ ms      \\
    6000               & $3.179$ s        & $223.024$ s      & $262.876$ ms      \\
    7000               & $4.107$ s        & $263.898$ s      & $389.152$ ms      \\
    \bottomrule
  \end{tabular}
  \caption{Experimental Result $(\varepsilon=1, \delta=0)$}
  \label{table:Experiment Result e1d0}
\end{table}

\begin{table}[h!]
  \centering
  \begin{tabular}{rrrr}
    \toprule
    \(m\)              & \(t_w\)          & \(t_p\)          & \(t_v\)           \\
    \midrule
    1000               & $0.704$ s        & $17.928$ s       & $17.678$ ms       \\
    2000               & $1.030$ s        & $29.631$ s       & $27.286$ ms       \\
    3000               & $1.460$ s        & $77.753$ s       & $100.707$ ms      \\
    4000               & $2.272$ s        & $118.993$ s      & $142.104$ ms      \\
    5000               & $2.441$ s        & $206.990$ s      & $240.001$ ms      \\
    6000               & $3.098$ s        & $219.309$ s      & $264.621$ ms      \\
    7000               & $3.695$ s        & $258.434$ s      & $280.222$ ms      \\
    \bottomrule
  \end{tabular}
  \caption{Experimental Result $(\varepsilon=0.5, \delta=1)$}
  \label{table:Experiment Result e0.5d1}
\end{table}

\begin{table}[h!]
  \centering
  \begin{tabular}{rrrr}
    \toprule
    \(m\)              & \(t_w\)          & \(t_p\)          & \(t_v\)           \\
    \midrule
    1000               & $0.547$ s        & $14.142$ s       & $19.336$ ms       \\
    2000               & $1.064$ s        & $27.570$ s       & $26.258$ ms       \\
    3000               & $1.722$ s        & $78.757$ s       & $92.317$ ms       \\
    4000               & $1.988$ s        & $122.262$ s      & $151.983$ ms      \\
    5000               & $2.572$ s        & $207.425$ s      & $272.152$ ms      \\
    6000               & $3.294$ s        & $220.124$ s      & $240.677$ ms      \\
    7000               & $4.062$ s        & $262.176$ s      & $339.360$ ms      \\
    \bottomrule
  \end{tabular}
  \caption{Experimental Result $(\varepsilon=1, \delta=1)$}
  \label{table:Experiment Result e1d1}
\end{table}

\FloatBarrier

\begin{figure}[h!]
    \centering
    \includegraphics[width=0.65\linewidth]{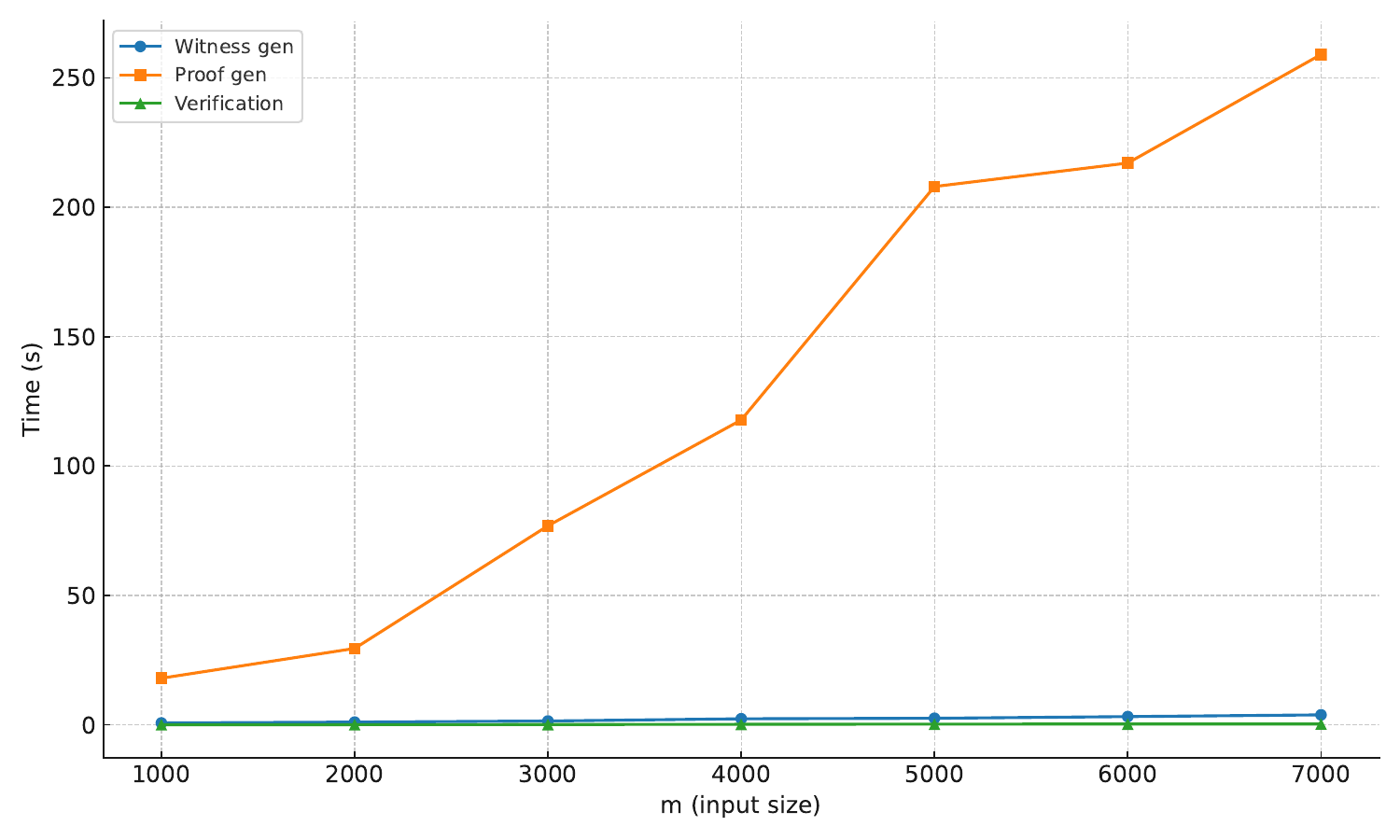}
    \caption{Experimental Result \(\varepsilon=0.5, \delta=0\)}
    \label{fig:experimental result e0.5d0}
\end{figure}

\begin{figure}[h!]
    \centering
    \includegraphics[width=0.65\linewidth]{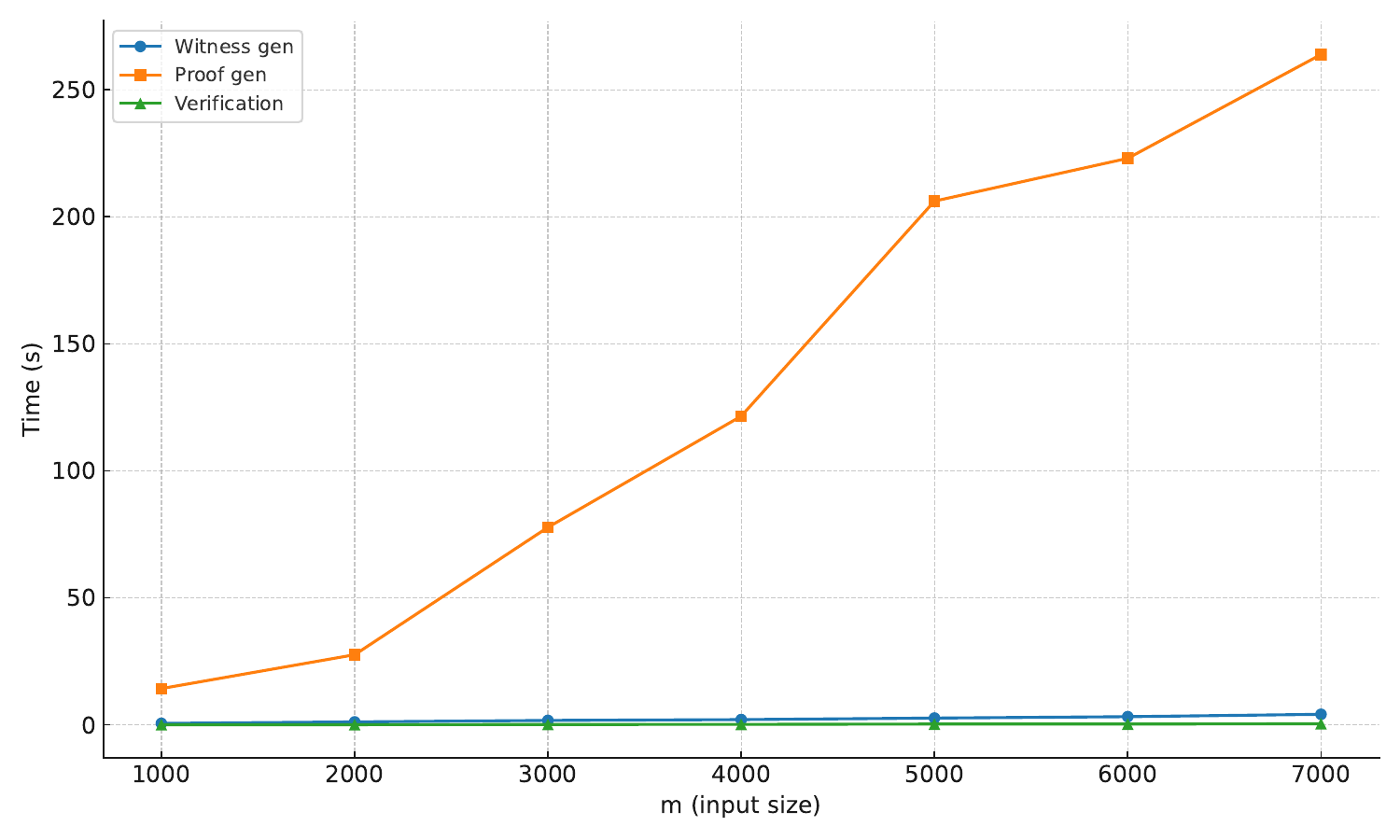}
    \caption{Experimental Result \(\varepsilon=1, \delta=0\)}
    \label{fig:experimental result e1d0}
\end{figure}

\begin{figure}[h!]
    \centering
    \includegraphics[width=0.65\linewidth]{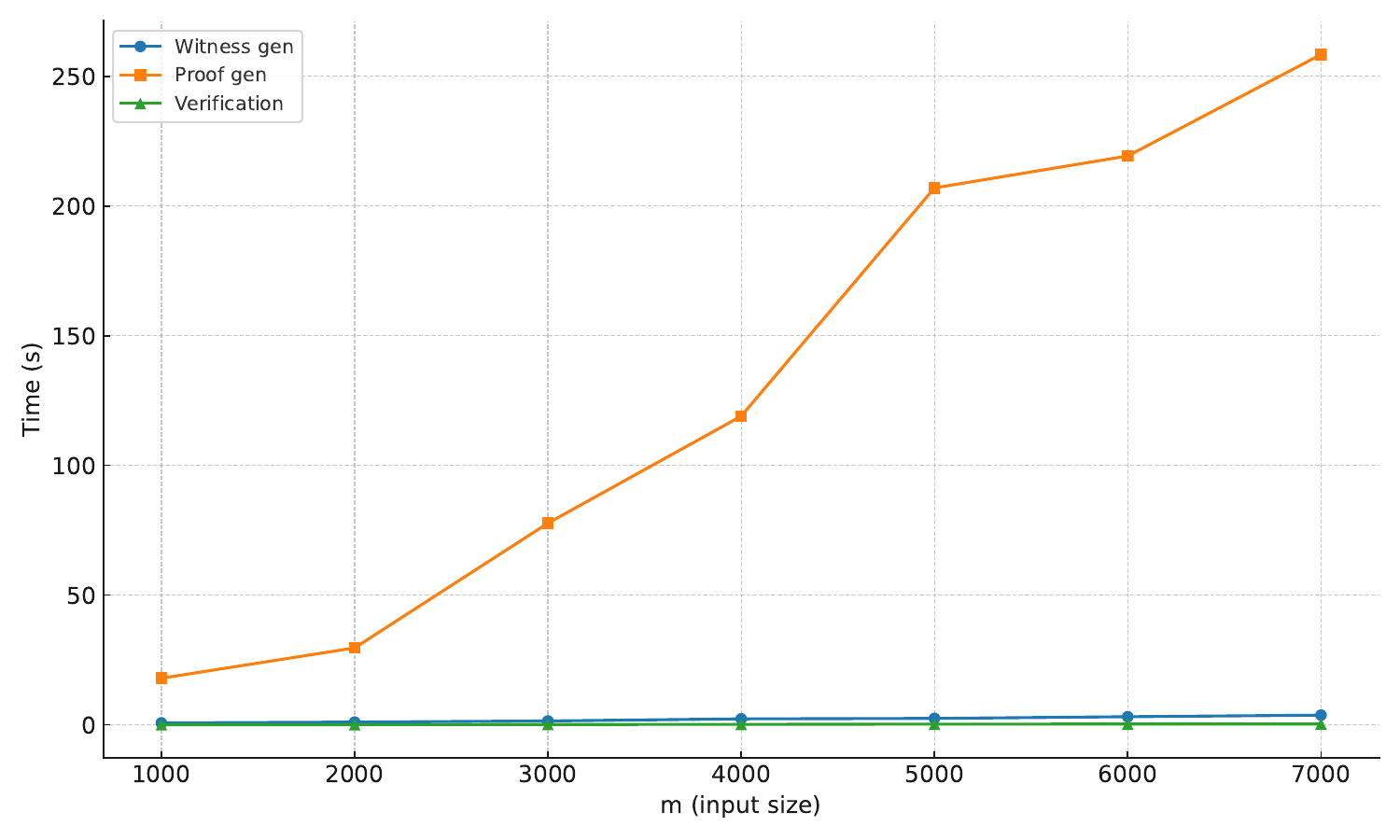}
    \caption{Experimental Result \(\varepsilon=0.5, \delta= e^{-31.5}\)}
    \label{fig:experimental result e0.5d1}
\end{figure}

\begin{figure}[h!]
    \centering
    \includegraphics[width=0.65\linewidth]{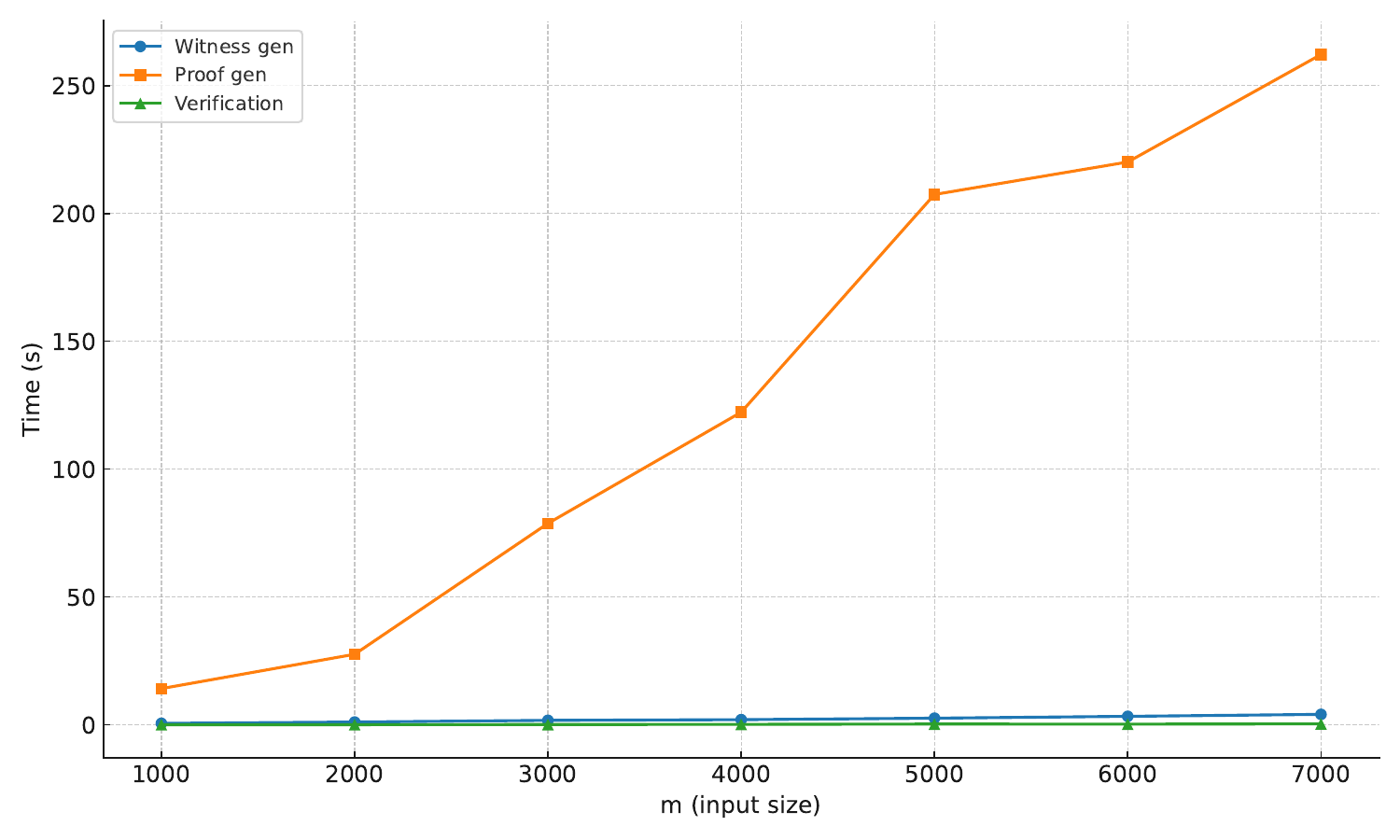}
    \caption{Experimental Result \(\varepsilon=1, \delta=e^{-63}\)}
    \label{fig:experimental result e1d1}
\end{figure}

\section{Conclusion}

We present \(\mathsf{VerExp}\), the first verifiable exponential mechanism for median estimation. First we formalize the setting by suggesting a security model with four parties: Data Providers, Data Analyst, Verifier and Public Bulletin Board. Then we define a pipeline of data flow with interaction among the parties, and design an arithmetic circuit for exponential mechanism with a scaled inverse CDF sampling procedure. Our construction guarantees strong privacy and security with high utility. We implement our scheme using Groth16 as our zkSNARK of choice and achieve public verifiability and very short verification time. 

Nevertheless, this work is tailored for median estimation due to the dependency on discrete rank-based utility. Extending for real-valued arbitrary utility function remains an open problem. Additionally, reducing the circuit complexity of \(O(m\cdot n)\) is another key direction for future research. 

\section*{Acknowledgments}
This work was supported by the New Faculty Startup Fund from Seoul National University.
%\clearpage

\bibliographystyle{ACM-Reference-Format}
\bibliography{reference}

%%% -*-BibTeX-*-
%%% Do NOT edit. File created by BibTeX with style
%%% ACM-Reference-Format-Journals [18-Jan-2012].

\begin{thebibliography}{25}

%%% ====================================================================
%%% NOTE TO THE USER: you can override these defaults by providing
%%% customized versions of any of these macros before the \bibliography
%%% command.  Each of them MUST provide its own final punctuation,
%%% except for \shownote{}, \showDOI{}, and \showURL{}.  The latter two
%%% do not use final punctuation, in order to avoid confusing it with
%%% the Web address.
%%%
%%% To suppress output of a particular field, define its macro to expand
%%% to an empty string, or better, \unskip, like this:
%%%
%%% \newcommand{\showDOI}[1]{\unskip}   % LaTeX syntax
%%%
%%% \def \showDOI #1{\unskip}           % plain TeX syntax
%%%
%%% ====================================================================

\ifx \showCODEN    \undefined \def \showCODEN     #1{\unskip}     \fi
\ifx \showDOI      \undefined \def \showDOI       #1{#1}\fi
\ifx \showISBNx    \undefined \def \showISBNx     #1{\unskip}     \fi
\ifx \showISBNxiii \undefined \def \showISBNxiii  #1{\unskip}     \fi
\ifx \showISSN     \undefined \def \showISSN      #1{\unskip}     \fi
\ifx \showLCCN     \undefined \def \showLCCN      #1{\unskip}     \fi
\ifx \shownote     \undefined \def \shownote      #1{#1}          \fi
\ifx \showarticletitle \undefined \def \showarticletitle #1{#1}   \fi
\ifx \showURL      \undefined \def \showURL       {\relax}        \fi
% The following commands are used for tagged output and should be
% invisible to TeX
\providecommand\bibfield[2]{#2}
\providecommand\bibinfo[2]{#2}
\providecommand\natexlab[1]{#1}
\providecommand\showeprint[2][]{arXiv:#2}

\bibitem[Abadi et~al\mbox{.}(2016)]%
        {abadi2016deep}
\bibfield{author}{\bibinfo{person}{Martin Abadi}, \bibinfo{person}{Andy Chu}, \bibinfo{person}{Ian Goodfellow}, \bibinfo{person}{H.~Brendan McMahan}, \bibinfo{person}{Ilya Mironov}, \bibinfo{person}{Kunal Talwar}, {and} \bibinfo{person}{Li Zhang}.} \bibinfo{year}{2016}\natexlab{}.
\newblock \showarticletitle{Deep Learning with Differential Privacy}. In \bibinfo{booktitle}{\emph{Proceedings of the 2016 ACM SIGSAC Conference on Computer and Communications Security, CCS '16}}. \bibinfo{publisher}{Association for Computing Machinery}, \bibinfo{pages}{308--318}.
\newblock
\showISBNx{978-1-4503-4139-4}
\urldef\tempurl%
\url{https://doi.org/10.1145/2976749.2978318}
\showDOI{\tempurl}


\bibitem[Ambainis et~al\mbox{.}(2003)]%
        {Crypto_RR_AmbainisJL03}
\bibfield{author}{\bibinfo{person}{Andris Ambainis}, \bibinfo{person}{Markus Jakobsson}, {and} \bibinfo{person}{Helger Lipmaa}.} \bibinfo{year}{2003}\natexlab{}.
\newblock \showarticletitle{Cryptographic Randomized Response Techniques}.
\newblock \bibinfo{journal}{\emph{IACR Cryptol. ePrint Arch.}}  \bibinfo{volume}{2003} (\bibinfo{year}{2003}), \bibinfo{pages}{27}.
\newblock
\urldef\tempurl%
\url{http://eprint.iacr.org/2003/027}
\showURL{%
\tempurl}


\bibitem[Biswas and Cormode(2023)]%
        {Biswas2023}
\bibfield{author}{\bibinfo{person}{Ari Biswas} {and} \bibinfo{person}{Graham Cormode}.} \bibinfo{year}{2023}\natexlab{}.
\newblock \showarticletitle{Interactive Proofs For Differentially Private Counting}. In \bibinfo{booktitle}{\emph{Proceedings of the 2023 ACM SIGSAC Conference on Computer and Communications Security}} (Copenhagen, Denmark) \emph{(\bibinfo{series}{CCS '23})}. \bibinfo{publisher}{Association for Computing Machinery}, \bibinfo{address}{New York, NY, USA}, \bibinfo{pages}{1919–1933}.
\newblock
\showISBNx{9798400700507}
\urldef\tempurl%
\url{https://doi.org/10.1145/3576915.3616681}
\showDOI{\tempurl}


\bibitem[Bontekoe et~al\mbox{.}(2024)]%
        {Bontekoe2024}
\bibfield{author}{\bibinfo{person}{Tariq Bontekoe}, \bibinfo{person}{Hassan~Jameel Asghar}, {and} \bibinfo{person}{Fatih Turkmen}.} \bibinfo{year}{2024}\natexlab{}.
\newblock \showarticletitle{Efficient Verifiable Differential Privacy with Input Authenticity in the Local and Shuffle Model}.
\newblock \bibinfo{journal}{\emph{CoRR}}  \bibinfo{volume}{abs/2406.18940} (\bibinfo{year}{2024}).
\newblock
\urldef\tempurl%
\url{https://doi.org/10.48550/ARXIV.2406.18940}
\showDOI{\tempurl}
\showeprint[arXiv]{2406.18940}


\bibitem[Dwork et~al\mbox{.}(2006a)]%
        {dwork2006our}
\bibfield{author}{\bibinfo{person}{Cynthia Dwork}, \bibinfo{person}{Krishnaram Kenthapadi}, \bibinfo{person}{Frank McSherry}, \bibinfo{person}{Ilya Mironov}, {and} \bibinfo{person}{Moni Naor}.} \bibinfo{year}{2006}\natexlab{a}.
\newblock \showarticletitle{Our Data, Ourselves: Privacy via Distributed Noise Generation}. In \bibinfo{booktitle}{\emph{Advances in Cryptology - EUROCRYPT 2006}}, \bibfield{editor}{\bibinfo{person}{Serge Vaudenay}} (Ed.). \bibinfo{publisher}{Springer Berlin Heidelberg}, \bibinfo{pages}{486--503}.
\newblock


\bibitem[Dwork et~al\mbox{.}(2006b)]%
        {DP2006_Dwork}
\bibfield{author}{\bibinfo{person}{Cynthia Dwork}, \bibinfo{person}{Frank McSherry}, \bibinfo{person}{Kobbi Nissim}, {and} \bibinfo{person}{Adam Smith}.} \bibinfo{year}{2006}\natexlab{b}.
\newblock \showarticletitle{Calibrating noise to sensitivity in private data analysis}. In \bibinfo{booktitle}{\emph{Proceedings of the 3rd Conference on Theory of Cryptography (TCC 2006)}}. \bibinfo{publisher}{Springer}, \bibinfo{pages}{265--284}.
\newblock


\bibitem[Dwork and Roth(2014)]%
        {DworkRoth2014}
\bibfield{author}{\bibinfo{person}{Cynthia Dwork} {and} \bibinfo{person}{Aaron Roth}.} \bibinfo{year}{2014}\natexlab{}.
\newblock \showarticletitle{The Algorithmic Foundations of Differential Privacy}.
\newblock \bibinfo{journal}{\emph{Foundations and Trends® in Theoretical Computer Science}} \bibinfo{volume}{9}, \bibinfo{number}{3--4} (\bibinfo{year}{2014}), \bibinfo{pages}{211--407}.
\newblock


\bibitem[Evfimievski et~al\mbox{.}(2003)]%
        {Evfimievski2003}
\bibfield{author}{\bibinfo{person}{A. Evfimievski}, \bibinfo{person}{J. Gehrke}, {and} \bibinfo{person}{R. Srikant}.} \bibinfo{year}{2003}\natexlab{}.
\newblock \showarticletitle{Limiting privacy breaches in privacy preserving data mining}. In \bibinfo{booktitle}{\emph{Proceedings of the twenty-second ACM SIGMOD-SIGACT-SIGART symposium on Principles of database systems}}. \bibinfo{publisher}{ACM}, \bibinfo{pages}{211--222}.
\newblock


\bibitem[Gabizon et~al\mbox{.}(2019)]%
        {Plonk_GabizonWC19}
\bibfield{author}{\bibinfo{person}{Ariel Gabizon}, \bibinfo{person}{Zachary~J. Williamson}, {and} \bibinfo{person}{Oana Ciobotaru}.} \bibinfo{year}{2019}\natexlab{}.
\newblock \bibinfo{title}{{PLONK}: Permutations over Lagrange-bases for Oecumenical Noninteractive arguments of Knowledge}.
\newblock \bibinfo{howpublished}{Cryptology {ePrint} Archive, Paper 2019/953}.
\newblock
\urldef\tempurl%
\url{https://eprint.iacr.org/2019/953}
\showURL{%
\tempurl}


\bibitem[Garrido et~al\mbox{.}(2022)]%
        {GarridoSB22}
\bibfield{author}{\bibinfo{person}{Gonzalo~Munilla Garrido}, \bibinfo{person}{Johannes Sedlmeir}, {and} \bibinfo{person}{Matthias Babel}.} \bibinfo{year}{2022}\natexlab{}.
\newblock \showarticletitle{Towards Verifiable Differentially-Private Polling}. In \bibinfo{booktitle}{\emph{{ARES} 2022: The 17th International Conference on Availability, Reliability and Security, Vienna,Austria, August 23 - 26, 2022}}. \bibinfo{publisher}{{ACM}}, \bibinfo{pages}{6:1--6:11}.
\newblock
\urldef\tempurl%
\url{https://doi.org/10.1145/3538969.3538992}
\showDOI{\tempurl}


\bibitem[Goldwasser et~al\mbox{.}(1985)]%
        {Goldwasser1985}
\bibfield{author}{\bibinfo{person}{Shafi Goldwasser}, \bibinfo{person}{Silvio Micali}, {and} \bibinfo{person}{Charles Rackoff}.} \bibinfo{year}{1985}\natexlab{}.
\newblock \showarticletitle{The Knowledge Complexity of Interactive Proof-Systems}. In \bibinfo{booktitle}{\emph{Proceedings of the 17th Annual ACM Symposium on Theory of Computing (STOC 1985)}}. \bibinfo{publisher}{ACM}, \bibinfo{pages}{291--299}.
\newblock


\bibitem[Groth(2016)]%
        {Groth16}
\bibfield{author}{\bibinfo{person}{Jens Groth}.} \bibinfo{year}{2016}\natexlab{}.
\newblock \bibinfo{title}{On the Size of Pairing-based Non-interactive Arguments}.
\newblock \bibinfo{howpublished}{Cryptology {ePrint} Archive, Paper 2016/260}.
\newblock
\urldef\tempurl%
\url{https://eprint.iacr.org/2016/260}
\showURL{%
\tempurl}


\bibitem[Gupta et~al\mbox{.}(2010)]%
        {gupta2010differentially}
\bibfield{author}{\bibinfo{person}{Anupam Gupta}, \bibinfo{person}{Katrina Ligett}, \bibinfo{person}{Frank McSherry}, \bibinfo{person}{Aaron Roth}, {and} \bibinfo{person}{Kunal Talwar}.} \bibinfo{year}{2010}\natexlab{}.
\newblock \showarticletitle{Differentially Private Combinatorial Optimization}. In \bibinfo{booktitle}{\emph{Proceedings of the Twenty-First Annual ACM-SIAM Symposium on Discrete Algorithms}}. \bibinfo{publisher}{SIAM}, \bibinfo{pages}{1106--1125}.
\newblock


\bibitem[Kasiviswanathan et~al\mbox{.}(2011)]%
        {Kasiviswanathan2011WhatCanWeLearnPrivately}
\bibfield{author}{\bibinfo{person}{S.P. Kasiviswanathan}, \bibinfo{person}{H.K. Lee}, \bibinfo{person}{K. Nissim}, \bibinfo{person}{S. Raskhodnikova}, {and} \bibinfo{person}{A. Smith}.} \bibinfo{year}{2011}\natexlab{}.
\newblock \showarticletitle{What can we learn privately?}
\newblock \bibinfo{journal}{\emph{SIAM J. Comput.}} \bibinfo{volume}{40}, \bibinfo{number}{3} (\bibinfo{year}{2011}), \bibinfo{pages}{793--826}.
\newblock
\urldef\tempurl%
\url{https://doi.org/10.1137/100806601}
\showDOI{\tempurl}


\bibitem[Kato et~al\mbox{.}(2021)]%
        {Kato0Y21}
\bibfield{author}{\bibinfo{person}{Fumiyuki Kato}, \bibinfo{person}{Yang Cao}, {and} \bibinfo{person}{Masatoshi Yoshikawa}.} \bibinfo{year}{2021}\natexlab{}.
\newblock \showarticletitle{Preventing Manipulation Attack in Local Differential Privacy Using Verifiable Randomization Mechanism}. In \bibinfo{booktitle}{\emph{Data and Applications Security and Privacy {XXXV} - 35th Annual {IFIP} {WG} 11.3 Conference, DBSec 2021, Calgary, Canada, July 19-20, 2021, Proceedings}} \emph{(\bibinfo{series}{Lecture Notes in Computer Science}, Vol.~\bibinfo{volume}{12840})}, \bibfield{editor}{\bibinfo{person}{Ken Barker} {and} \bibinfo{person}{Kambiz Ghazinour}} (Eds.). \bibinfo{publisher}{Springer}, \bibinfo{pages}{43--60}.
\newblock


\bibitem[McSherry and Talwar(2007)]%
        {McSherry2007}
\bibfield{author}{\bibinfo{person}{Frank McSherry} {and} \bibinfo{person}{Kunal Talwar}.} \bibinfo{year}{2007}\natexlab{}.
\newblock \showarticletitle{Mechanism design via differential privacy}. In \bibinfo{booktitle}{\emph{Proceedings of the 48th Annual IEEE Symposium on Foundations of Computer Science (FOCS 2007)}}. \bibinfo{publisher}{IEEE}, \bibinfo{pages}{94--103}.
\newblock


\bibitem[Narayan et~al\mbox{.}(2015)]%
        {NarayanFPH15}
\bibfield{author}{\bibinfo{person}{Arjun Narayan}, \bibinfo{person}{Ariel Feldman}, \bibinfo{person}{Antonis Papadimitriou}, {and} \bibinfo{person}{Andreas Haeberlen}.} \bibinfo{year}{2015}\natexlab{}.
\newblock \showarticletitle{Verifiable differential privacy}. In \bibinfo{booktitle}{\emph{Proceedings of the Tenth European Conference on Computer Systems, EuroSys 2015, Bordeaux, France, April 21-24, 2015}}, \bibfield{editor}{\bibinfo{person}{Laurent R{\'{e}}veill{\`{e}}re}, \bibinfo{person}{Tim Harris}, {and} \bibinfo{person}{Maurice Herlihy}} (Eds.). \bibinfo{publisher}{{ACM}}, \bibinfo{pages}{28:1--28:14}.
\newblock
\urldef\tempurl%
\url{https://doi.org/10.1145/2741948.2741978}
\showDOI{\tempurl}


\bibitem[Narayanan and Shmatikov(2008)]%
        {Narayanan_Netflix}
\bibfield{author}{\bibinfo{person}{Arvind Narayanan} {and} \bibinfo{person}{Vitaly Shmatikov}.} \bibinfo{year}{2008}\natexlab{}.
\newblock \showarticletitle{Robust De-anonymization of Large Sparse Datasets}.
\newblock \bibinfo{journal}{\emph{2008 IEEE Symposium on Security and Privacy (sp 2008)}} (\bibinfo{year}{2008}), \bibinfo{pages}{111--125}.
\newblock
\urldef\tempurl%
\url{https://api.semanticscholar.org/CorpusID:8843913}
\showURL{%
\tempurl}


\bibitem[Parno et~al\mbox{.}(2013)]%
        {Pinocchio_ParnoBH13}
\bibfield{author}{\bibinfo{person}{Bryan Parno}, \bibinfo{person}{Jon Howell}, \bibinfo{person}{Craig Gentry}, {and} \bibinfo{person}{Mariana Raykova}.} \bibinfo{year}{2013}\natexlab{}.
\newblock \showarticletitle{Pinocchio: Nearly Practical Verifiable Computation}. In \bibinfo{booktitle}{\emph{2013 IEEE Symposium on Security and Privacy}}. \bibinfo{pages}{238--252}.
\newblock
\urldef\tempurl%
\url{https://doi.org/10.1109/SP.2013.47}
\showDOI{\tempurl}


\bibitem[Sun et~al\mbox{.}(2025)]%
        {SunBLZ25}
\bibfield{author}{\bibinfo{person}{Haochen Sun}, \bibinfo{person}{Tonghe Bai}, \bibinfo{person}{Jason Li}, {and} \bibinfo{person}{Hongyang Zhang}.} \bibinfo{year}{2025}\natexlab{}.
\newblock \showarticletitle{zkDL: Efficient Zero-Knowledge Proofs of Deep Learning Training}.
\newblock \bibinfo{journal}{\emph{{IEEE} Trans. Inf. Forensics Secur.}}  \bibinfo{volume}{20} (\bibinfo{year}{2025}), \bibinfo{pages}{914--927}.
\newblock


\bibitem[Sun et~al\mbox{.}(2024)]%
        {SunL024}
\bibfield{author}{\bibinfo{person}{Haochen Sun}, \bibinfo{person}{Jason Li}, {and} \bibinfo{person}{Hongyang Zhang}.} \bibinfo{year}{2024}\natexlab{}.
\newblock \showarticletitle{zkLLM: Zero Knowledge Proofs for Large Language Models}. In \bibinfo{booktitle}{\emph{Proceedings of the 2024 on {ACM} {SIGSAC} Conference on Computer and Communications Security, {CCS} 2024, Salt Lake City, UT, USA, October 14-18, 2024}}. \bibinfo{publisher}{{ACM}}, \bibinfo{pages}{4405--4419}.
\newblock


\bibitem[Vietri et~al\mbox{.}(2020)]%
        {vietri2020private}
\bibfield{author}{\bibinfo{person}{Giuseppe Vietri}, \bibinfo{person}{Borja Balle}, \bibinfo{person}{Akshay Krishnamurthy}, {and} \bibinfo{person}{Zhiwei~Steven Wu}.} \bibinfo{year}{2020}\natexlab{}.
\newblock \showarticletitle{Private reinforcement learning with PAC and regret guarantees}. In \bibinfo{booktitle}{\emph{Proceedings of the 37th International Conference on Machine Learning, ICML 2020, 13-18 July 2020, Virtual Event}}. \bibinfo{pages}{9754--9764}.
\newblock


\bibitem[Warner(1965)]%
        {Randomized_Response_Warner1965}
\bibfield{author}{\bibinfo{person}{Stanley~L. Warner}.} \bibinfo{year}{1965}\natexlab{}.
\newblock \showarticletitle{Randomized Response: A Survey Technique for Eliminating Evasive Answer Bias}.
\newblock \bibinfo{journal}{\emph{J. Amer. Statist. Assoc.}} \bibinfo{volume}{60}, \bibinfo{number}{309} (\bibinfo{year}{1965}), \bibinfo{pages}{63--69}.
\newblock
\showISSN{01621459, 1537274X}
\urldef\tempurl%
\url{http://www.jstor.org/stable/2283137}
\showURL{%
\tempurl}


\bibitem[Wei et~al\mbox{.}(2025)]%
        {WeiCYLP25}
\bibfield{author}{\bibinfo{person}{Jianqi Wei}, \bibinfo{person}{Yuling Chen}, \bibinfo{person}{Xiuzhang Yang}, \bibinfo{person}{Yun Luo}, {and} \bibinfo{person}{Xintao Pei}.} \bibinfo{year}{2025}\natexlab{}.
\newblock \showarticletitle{A verifiable scheme for differential privacy based on zero-knowledge proofs}.
\newblock \bibinfo{journal}{\emph{J. King Saud Univ. Comput. Inf. Sci.}} \bibinfo{volume}{37}, \bibinfo{number}{3} (\bibinfo{year}{2025}).
\newblock
\urldef\tempurl%
\url{https://doi.org/10.1007/S44443-025-00028-Z}
\showDOI{\tempurl}


\bibitem[Zang and Bolot(2011)]%
        {Zang2011Anonymization}
\bibfield{author}{\bibinfo{person}{Hui Zang} {and} \bibinfo{person}{Jean Bolot}.} \bibinfo{year}{2011}\natexlab{}.
\newblock \showarticletitle{Anonymization of location data does not work: a large-scale measurement study}. In \bibinfo{booktitle}{\emph{Proceedings of the 17th Annual International Conference on Mobile Computing and Networking (MobiCom '11)}}. \bibinfo{publisher}{ACM}, \bibinfo{address}{New York, NY, USA}, \bibinfo{pages}{145--156}.
\newblock


\end{thebibliography}

\clearpage
\appendix

\section{Pseudocode for Modules}\label{section:pseudocodes}

\begin{table}[h!]
  \centering
  \begin{tabular}{p{0.95\linewidth}}
    \toprule
        \(\mathsf{Bind}(\{\mathsf{input}_i\}_{i\in[m]}, \{\mathsf{rand}_i\}_{i\in[m]})\) \\ 
    \midrule
    \begin{minipage}{\linewidth}
      \begin{algorithmic}[1]
        \FOR{\(i\in[m]\)}
            \STATE \(\mathsf{com}_i\leftarrow\mathsf{hash}(\mathsf{input}_i, \mathsf{rand}_i)\)
        \ENDFOR
        \STATE \textbf{output} \(\{\mathsf{com}_i\}_{i\in[m]}\)
      \end{algorithmic}
    \end{minipage}
    \\ \bottomrule 
  \end{tabular}
  \caption{\(\mathsf{Bind}\) Module}
  \label{table:Input Bind} 
\end{table}

\begin{table}[h!]
  \centering
  \begin{tabular}{p{0.95\linewidth}}
    \toprule
        \(\mathsf{Util}(\{\mathsf{range}_i\}_{i\in[n]}, \{\mathsf{input}_i\}_{i\in[m]})\) \\ 
    \midrule
    \begin{minipage}{\linewidth}
      \begin{algorithmic}[1]
        \FOR{\(i\in[n]\)}
            \FOR{\(j\in[m]\)}
                \IF{\(\mathsf{range}_i>\mathsf{input_j}\)}
                    \STATE \(ind_{i,j}\leftarrow 1\)
                \ELSE
                    \STATE \(ind_{i,j}\leftarrow 0\)
                \ENDIF
            \ENDFOR
            \STATE \(\mathsf{util}_i=\left|\frac{n-1}{2}-\sum_{j=0}^{m-1}ind_{i,j}\right|\)
        \ENDFOR
        \STATE \(\{\mathsf{util}'_i\}_{i\in[n]}\leftarrow \mathsf{SubMin}(\{\mathsf{util}_i\}_{i\in[n]})\)
        \STATE \textbf{output} \(\{\mathsf{util}'_i\}_{i\in[n]}\)
      \end{algorithmic}
    \end{minipage}
    \\ \bottomrule
  \end{tabular}
  \caption{\(\mathsf{Util}\) Module}
  \label{table:Utility} 
\end{table}

\begin{table}[h!]
  \centering
  \begin{tabular}{p{0.95\linewidth}}
    \toprule
        \(\mathsf{SubMin}(\{\mathsf{util}_i\}_{i\in[n]})\) \\ 
    \midrule
    \begin{minipage}{\linewidth}
      \begin{algorithmic}[1]
        \STATE \(min \leftarrow \mathsf{util}_0\)
        \FOR{\(1\le i \le n-1\)}
            \IF{\(min > \mathsf{util}_{i}\)}
                \STATE \(min\leftarrow \mathsf{util}_i\)
            \ENDIF
        \ENDFOR

        \FOR{\(i\in[n]\)}
            \STATE \(\mathsf{util}'_i\leftarrow \mathsf{util}_i-min\)
        \ENDFOR
        \STATE \textbf{output} \(\{\mathsf{util}'_i\}_{i\in[n]}\)
      \end{algorithmic}
    \end{minipage}
    \\ \bottomrule
  \end{tabular}
  \caption{\(\mathsf{SubMin}\) Module}
  \label{table:SubMin} 
\end{table}

\begin{table}[h!]
  \centering
  \begin{tabular}{p{0.95\linewidth}}
    \toprule
        \(\mathsf{ExpLookup}(\mathsf{util}'_i)\) \\ 
    \midrule
    \begin{minipage}{\linewidth}
      \begin{algorithmic}[1]
        \IF{\(\mathsf{util}'_i>l-1\)}
            \STATE \(\mathsf{expval}_i\leftarrow k=\left\lceil\frac{1}{\exp(\varepsilon/2)-1}\right\rceil\)
        \ELSE
            \STATE \(\mathsf{expval}_i\leftarrow T[\mathsf{util}'_i]\)
        \ENDIF
        \STATE \textbf{output} \(\mathsf{expval}_i\)
      \end{algorithmic}
    \end{minipage}
    \\ \bottomrule
  \end{tabular}
  \caption{\(\mathsf{ExpLookup}\) Module (\textit{\textbf{setk}})}
  \label{table:ExpLookup setk} 
\end{table}

\begin{table}[h!]
  \centering
  \begin{tabular}{p{0.95\linewidth}}
    \toprule
        \(\mathsf{ExpLookup}(\mathsf{util}'_i)\) \\ 
    \midrule
    \begin{minipage}{\linewidth}
      \begin{algorithmic}[1]
        \IF{\(\mathsf{util}'_i>l-1\)}
            \STATE \(\mathsf{expval}_i\leftarrow 0\)
        \ELSE
            \STATE \(\mathsf{expval}_i\leftarrow T[\mathsf{util}'_i]\)
        \ENDIF
        \STATE \textbf{output} \(\mathsf{expval}_i\)
      \end{algorithmic}
    \end{minipage}
    \\ \bottomrule
  \end{tabular}
  \caption{\(\mathsf{ExpLookup}\) Module (\textit{\textbf{set0}})}
  \label{table:ExpLookup set0} 
\end{table}

\begin{table}[h!]
  \centering
  \begin{tabular}{p{0.95\linewidth}}
    \toprule
        \(\mathsf{InverseCDF}(\{s_i\}_{i\in[n]}, \{\mathsf{range}_i\}_{i\in[n]}, \rho)\) \\ 
    \midrule
    \begin{minipage}{\linewidth}
      \begin{algorithmic}[1]
        \FOR{\(i\in[n]\)}
            \IF{\(s_i>\rho\)}
                \STATE \(sig_i \leftarrow 1\)
            \ELSE
                \STATE \(sig_i \leftarrow 0\)
            \ENDIF
        \ENDFOR
        % \FOR{\(i\in[n]\)}
        %     \IF{\(i=0\)}
        %         \STATE \(sig'_i=sig_i\)
        %     \ELSE
        %         \STATE \(sig'_i=sig_{i-1} + sig_i \mod{2}\)
        %     \ENDIF
        % \ENDFOR
        \STATE \(sig'_0\leftarrow sig_0\)
        \FOR{\(1\le i\le n-1\)}
            \STATE \(sig'_i\leftarrow sig_{i-1} + sig_i \mod{2}\)
        \ENDFOR
        \FOR{\(i\in[n]\)}
            \STATE \(sig_i''=\mathsf{range}_i\cdot sig_i'\)
        \ENDFOR
        \STATE \(\mathsf{med}\leftarrow \sum_{i=0}^{n-1} sig''_i\)
        \STATE \textbf{output} \(\mathsf{med}\)
      \end{algorithmic}
    \end{minipage}
    \\ \bottomrule
  \end{tabular}
  \caption{\(\mathsf{InverseCDF}\) Module}
  \label{table:InverseCDF} 
\end{table}

\FloatBarrier

\clearpage

\section{Figures for The Pipeline And The Circuits} \label{section:figures}

\begin{figure}[h!]
    \centering
    \includegraphics[width=0.85\linewidth]{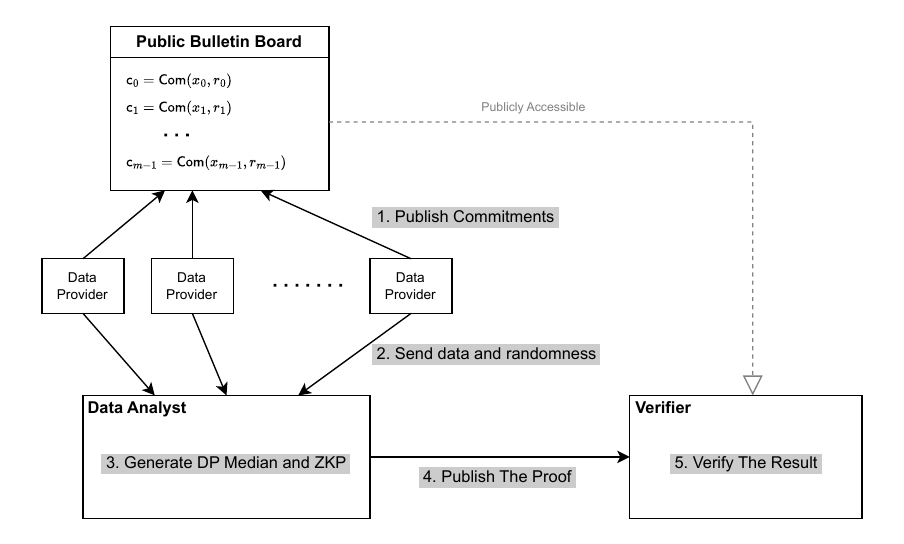}
    \caption{Pipeline}
    \label{fig:Pipeline}
\end{figure}

\vspace{-1em}

\begin{figure}[h!]
    \centering
    \includegraphics[width=0.7\linewidth]{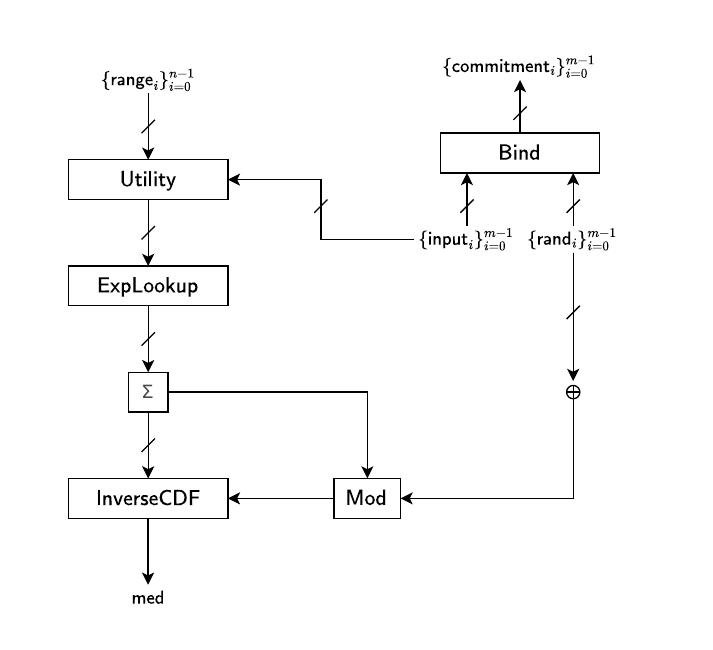}
    \caption{The Main Circuit \(\mathcal{C}\)}
    \label{fig:Overall circuit}
\end{figure}

\vspace{-1em}

\FloatBarrier

\begin{figure}[h!]
    \centering
    \includegraphics[width=0.9\linewidth]{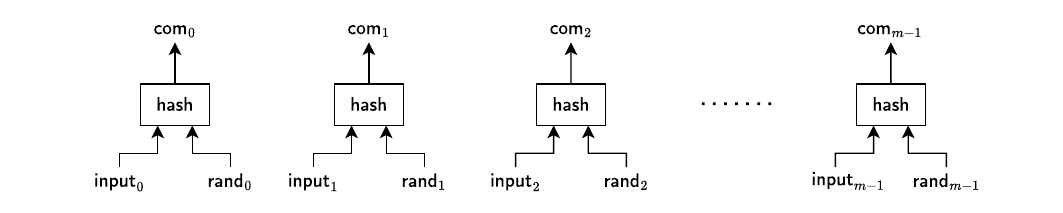}
    \caption{\(\mathsf{Bind}\) Module}
    \label{fig:Binder}
\end{figure}

\vspace{-1em}

\begin{figure}[h!]
    \centering
    \includegraphics[width=0.9\linewidth]{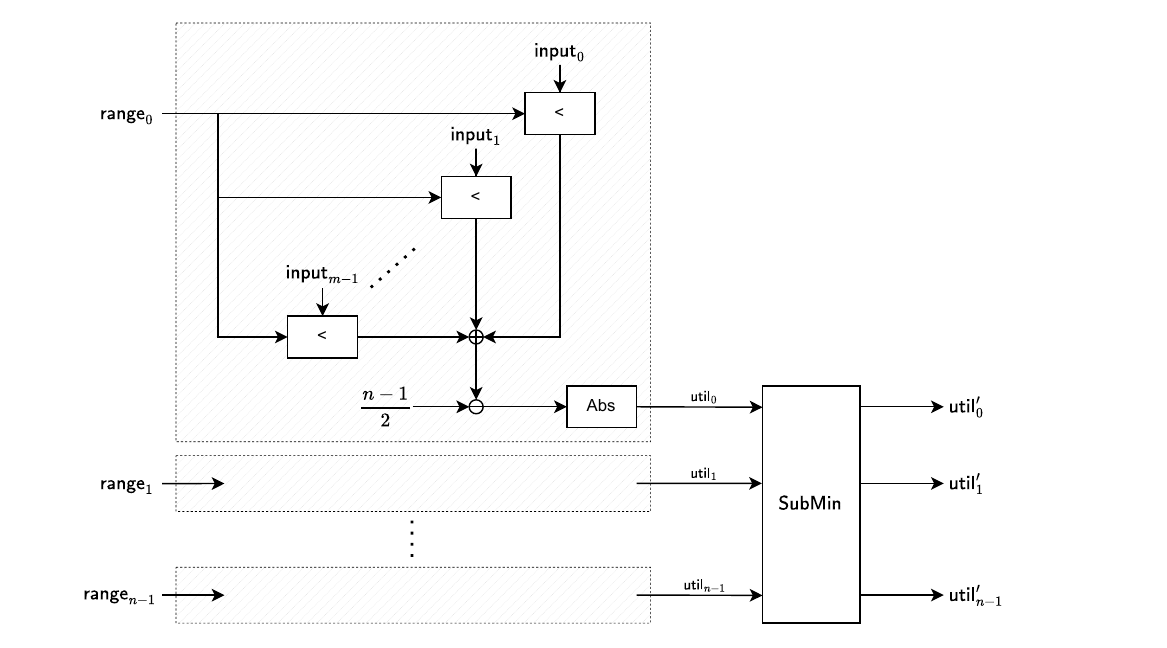}
    \caption{\(\mathsf{Util}\) Module}
    \label{fig:Util}
\end{figure}

\begin{figure}[h!]
    \centering
    \includegraphics[width=0.75\linewidth]{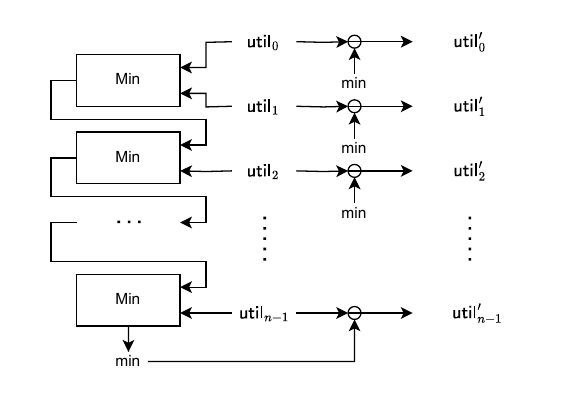}
    \caption{\(\mathsf{SubMin}\) Module}
    \label{fig:SubMin}
\end{figure}

\begin{figure}[h!]
    \centering
    \includegraphics[width=0.95\linewidth]{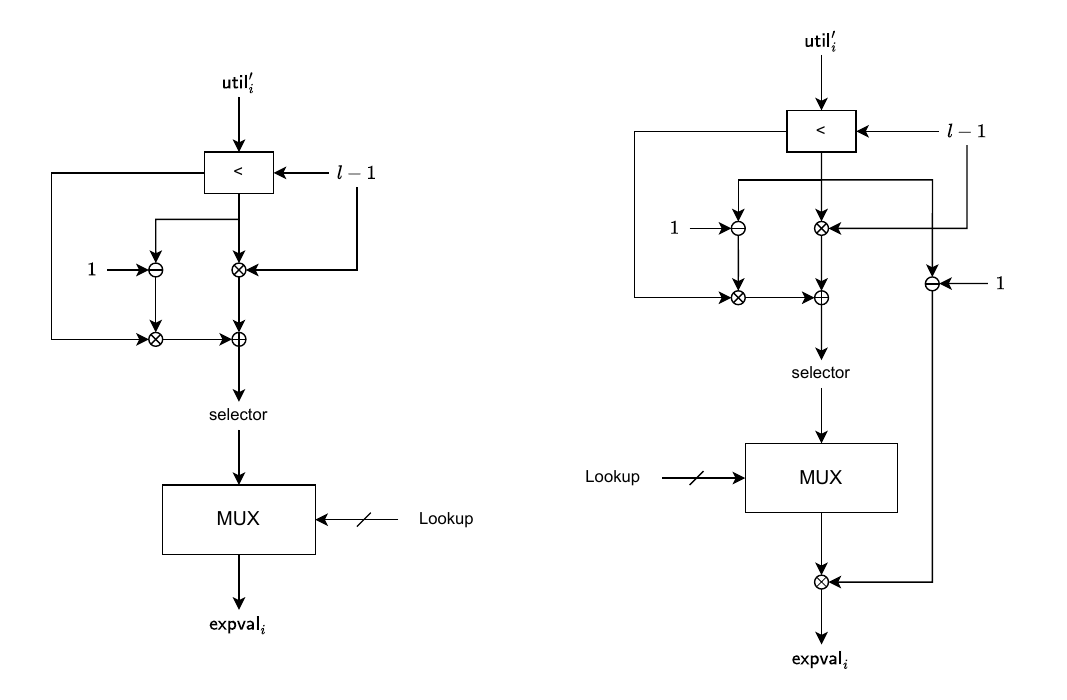}
    \caption{\(\mathsf{ExpLookup}\) Module}
    \label{fig:ExpLookup}
\end{figure}

\begin{figure}[h!]
    \centering
    \includegraphics[width=1\linewidth]{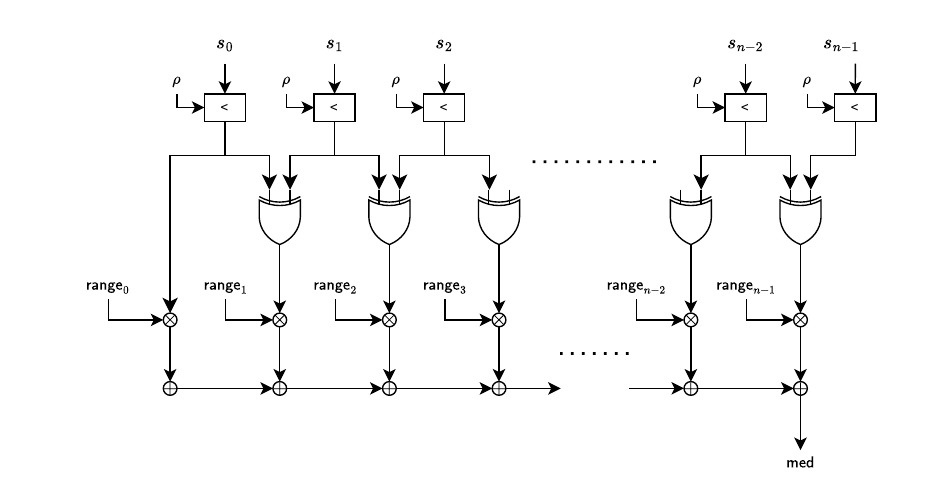}
    \caption{\(\mathsf{InverseCDF}\) Module}
    \label{fig:InverseCDF}
\end{figure}

\FloatBarrier

\clearpage

\section{Lemmas}

\begin{lemma} \label{lem: approx err}
    Let $M$ be a sufficiently large integer and $a>1$ be a constant. Let Sequences $A[i]$ and $B[i]$ be defined as follows.
    \begin{align*}
        A[i] = 
        \begin{cases}
          M & \text{if } i=0 \\
          \left\lceil \frac{A[i-1]}{a} \right\rceil  & \text{otherwise}
        \end{cases} ,\;\;\;\;\;\;\;\;\;\;\;\;
        B[i] = N\cdot e^{-i} \notag
    \end{align*}
    Then, for all $i$, $\left|A[i]-B[i]\right|<\frac{a}{a-1}$
\end{lemma}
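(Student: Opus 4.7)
The plan is to track the approximation error $\epsilon_i := A[i]-B[i]$, show it satisfies a linear recursion driven by the per-step rounding slack, and bound it by a geometric series. I interpret $B[i]$ as the exact continuous geometric reference that $A[i]$ discretizes, namely $M\cdot a^{-i}$; the bound $a/(a-1)$ is only meaningful relative to the base $a$ of the recursion for $A$, and matching $A[0]=B[0]=M$ is necessary for any $i$-uniform bound, so this reading is forced by the stated conclusion even though the displayed formula writes $N$ and base $e$.

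First, I would decompose the ceiling. For each $i\geq 1$ there is a unique $\theta_i\in[0,1)$ with $A[i] = A[i-1]/a + \theta_i$, by the definition of $\lceil\cdot\rceil$ applied to the rational $A[i-1]/a$. Since $B[i] = B[i-1]/a$ exactly, subtracting yields the one-step error recursion
\begin{equation*}
    \epsilon_i \;=\; \frac{\epsilon_{i-1}}{a} + \theta_i, \qquad \epsilon_0 \;=\; 0.
\end{equation*}

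Second, I unroll this recursion into the closed form $\epsilon_i = \sum_{j=1}^{i} \theta_j\, a^{-(i-j)}$ (a trivial induction on $i$), and then apply the triangle inequality together with $\theta_j\in[0,1)$:
\begin{equation*}
    |\epsilon_i| \;\leq\; \sum_{j=1}^{i} \theta_j\, a^{-(i-j)} \;<\; \sum_{k=0}^{i-1} a^{-k} \;<\; \sum_{k=0}^{\infty} a^{-k} \;=\; \frac{a}{a-1},
\end{equation*}
where the final equality uses $a>1$. This gives the stated bound.

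The main obstacle is interpretive rather than technical. Reconciling the stated $B[i]=N\cdot e^{-i}$ with the recursion for $A$ is the only non-routine part: the bound cannot hold against a sequence with a different starting value or a different decay rate, so the proof only goes through once $B$ is read as the exact $M\cdot a^{-i}$ counterpart of $A$. After that resolution, the argument is a one-line error recursion followed by a geometric sum; the hypothesis that $M$ is ``sufficiently large'' does not enter the bound itself and serves only to keep $A[i]\geq 1$ over the range where the approximation is used in the lookup-table construction.
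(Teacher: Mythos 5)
Your proposal is correct and follows essentially the same route as the paper's proof: decompose the ceiling as $A[i]=A[i-1]/a+\alpha_i$ with $\alpha_i\in[0,1)$, derive the error recursion $E[i]=E[i-1]/a+\alpha_i$, unroll it, and bound by the geometric series $\sum_k a^{-k}<\frac{a}{a-1}$. Your explicit observation that the stated $B[i]=N\cdot e^{-i}$ must be read as the exact base-$a$ geometric counterpart of $A$ with matching initial value is a correct diagnosis of a typo in the statement (the paper's own proof silently uses $B[i]=B[i-1]/a$ and likewise mixes up $e$ and $a$ in its unrolled sum), so your version is, if anything, the more careful one.
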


\begin{proof}
    We can express each element in the sequence $A$ as follows.
    \begin{align*}
        A[i] = \frac{A[i-1]}{a}+\alpha_i \;(0 \le \alpha_i < 1)
    \end{align*}
    Let $E[i]$ be the error of $i$-th term between $A$ and $B$. Then,
    \begin{align*}
        E[i] &= A[i]-B[i] \\
        &= \left(\frac{T[i-1]}{a}+\alpha_i\right)-\frac{B[i-1]}{a} \\
        &= \frac{E[i-1]}{a} + \alpha_i
    \end{align*}
    By solving the recurrence relation, we get 
    \begin{align*}
        E[i]=\sum_{j=1}^{i}\frac{\alpha_i}{e^{i-j}}
    \end{align*}
    Since $0\le\alpha_i<1$ for all $i$, we get
    \begin{align*}
        E[i] \le \sum_{j=1}^{i}\frac{1}{a^{i-j}} = \frac{a}{a-1}
    \end{align*}
\end{proof}

\FloatBarrier

\begin{lemma} \label{lem:randomness rho is comp}
    Let the statistical distribution of \(\rho\) be \(\mathcal{U}_{\mathsf{comp}}(\mathbb{Z}_{s_{n-1}})\).
    If \(s_{n-1} \ll p\),
    \[
        \mathcal{U}_{\mathsf{comp}}(\mathbb{Z}_{s_{n-1}}) \overset{c}{\approx} \mathcal{U}(\mathbb{Z}_{s_{n-1}})
    \]
\end{lemma}

\begin{proof}
    Recall that we assume that each \(\mathcal{D}_i\) generates \(r_i\) from its private randomness, uniformly at random.
    We obtain \(\rho\) as follows:
    \[
        \rho = \left(\sum_{i=0}^{m-1}r_i\mod{p}\right) \mod{s_{n-1}}
    \]
    Let \(z\equiv p \mod{s_{n-1}}\) such that \(0\le z < s_{n-1}\).
    Then 
    \[\Pr[\rho=x]=
        \begin{cases}
          \frac{1}{p}+\frac{p-z}{ps_{n-1}}      & \text{if } 0 \le x \le z,\\
          \frac{p-z}{ps_{n-1}}       & \text{otherwise } x < 0.
        \end{cases}
    \]
    By the security of the underlying zkSNARK scheme, \(p(\lambda)\) is an exponentially large prime. Therefore by the definition of statistical distance \(\Delta\), 
    \begin{align*}
        \Delta(\mathcal{U}_{\mathsf{comp}}(\mathbb{Z}&_{s_{n-1}}), \mathcal{U}(\mathbb{Z}_{s_{n-1}}))\\ 
        &= \frac{1}{2}\sum_{x\in\mathbb{Z}_{s_{n-1}}} \left| \Pr[X=x]-\Pr[Y=x]\right|\\
        &= \frac{1}{2}\left(\sum_{x=0}^{z-1}\frac{z-s_{n-1}}{ps_{n-1}}+\sum_{x=z}^{s_{n-1}-1}\frac{z}{ps_{n-1}}\right) \\
        &= \frac{z(s_{n-1}-z)}{ps_{n-1}} \\
        &\le \frac{s_{n-1}}{4p}
    \end{align*}
    when \(X\sim \mathcal{U}_{\mathsf{comp}}(\mathbb{Z}_{s_{n-1}})\) and \(Y\sim \mathcal{U}(\mathbb{Z}_{s_{n-1}})\).
    
    Therefore if \(s_{n-1} \ll p\), \(\Delta(\mathcal{U}_{comp}(\mathbb{Z}_{s_{n-1}}), \mathcal{U}(\mathbb{Z}_{s_{n-1}}))\) is negligible. Furthermore, since statistical negligible similarity implies computational indistinguishability, we can conclude that \(\mathcal{U}_{comp}(\mathbb{Z}_{s_{n-1}})\) is computationally indistinguishable from the uniform distribution, \(\mathcal{U}(\mathbb{Z}_{s_{n-1}})\).
\end{proof}

\FloatBarrier

\begin{lemma} \label{lem:randomness change}
    Assume a mechanism \(\mathcal{M}\) satisfies \((\varepsilon,\delta)\)-DP when the underlying randomness distribution is \(\mathcal{U}(S)\) for a set \(S\). The mechanism \(\mathcal{M}\) satisfies \((\varepsilon,\delta)\)-computational DP when the underlying randomness distribution is \(\mathcal{U}_{\mathsf{comp}}(S)\) such that 
    \[
        \mathcal{U}_{\mathsf{comp}}(S)\overset{c}{\approx}\mathcal{U}(S)
    \]
\end{lemma}

\begin{proof}
    Let \(\mathcal{DB}\) and \(\mathcal{DB}'\) be adjacent databaes. Let \(\mathcal{M}_\mathsf{uni}\) be the mechanism \(\mathcal{M}\) with randomness \(\mathcal{U}(S)\) and \(\mathcal{M}_\mathsf{comp}\) be the mechanism \(\mathcal{M}\) with randomness \(\mathcal{U}_\mathsf{comp}(S)\).
    
    For any PPT distinguisher \(A\),
    \begin{align*}
        &\left|\Pr[A(\mathcal{M}_\mathsf{comp}(\mathcal{DB})=1)-\Pr[A(\mathcal{M}_{\mathsf{comp}}(\mathcal{DB}')=1)]] \right| \\
        &\le \left| \Pr[A(\mathcal{M}_\mathsf{comp}(\mathcal{DB})=1)-\Pr[A(\mathcal{M}_{\mathsf{uni}}(\mathcal{DB})=1)]] \right| \\
        &\;\;\;+ \left| \Pr[A(\mathcal{M}_\mathsf{uni}(\mathcal{DB})=1)-\Pr[A(\mathcal{M}_{\mathsf{uni}}(\mathcal{DB}')=1)]] \right| \\
        &\;\;\;+ \left| \Pr[A(\mathcal{M}_\mathsf{uni}(\mathcal{DB}')=1)-\Pr[A(\mathcal{M}_{\mathsf{comp}}(\mathcal{DB}')=1)]] \right| \\
        &\le e^\varepsilon+\delta+\mathsf{negl}(\lambda)
    \end{align*}

    By the definition of computational differential privacy, \(\mathcal{M}_\mathsf{comp}\) is \((\varepsilon,\delta)\)-computational DP. If we assume that the statistical distance of the randomnesses used for \(\mathcal{M}_\mathsf{uni}\) and \(\mathcal{M}_\mathsf{comp}\) is \(\gamma\) and don't omit the negligible term, we can say that \(\mathcal{M}_\mathsf{comp}\) is \((\varepsilon,\delta+2\gamma)\)-computational DP.
\end{proof}

\FloatBarrier

\begin{lemma} \label{lem:randomness switch bad prob}
    Let there be two statistical distribution \(R,R'\) whose sample space is the same. Assume that the statistical distance \(\Delta\) is:
    \[
        \Delta(R,R')=\gamma
    \]
    Let \(\mathcal{M}\) be a mechanism that uses \(R\) or \(R'\) as the source of its randomness. We denote each instance as \(\mathcal{M}_R\) and \(\mathcal{M}_{R'}\).
    Let \(E\) be the event that is defined solely in terms of the mechanism's output. Then,
    \[
        |\Pr[E\leftarrow\mathcal{M}_R]-\Pr[E\leftarrow \mathcal{M}_{R'}]| \le \gamma
    \]
\end{lemma}

\begin{proof}
    By the definition of statistical distance, 
    \begin{align*}
        \Delta(R,R') &= \frac{1}{2}\sum_x|\Pr[r=x]-\Pr[r'=x]| \\
        &=  \max_E|\Pr[E\leftarrow\mathcal{M}_R]-\Pr[E\leftarrow \mathcal{M}_{R'}]|
    \end{align*}
\end{proof}

\FloatBarrier

% Check whether the conference requires a reproducibility checklist to be included in the paper.
% If so, you can uncomment the following line and ajust the path to include it.
% \input{../../ReproducibilityChecklist/LaTeX/ReproducibilityChecklist.tex}

\end{document}